\documentclass[11pt,reqno,oneside]{amsart}

\usepackage{amsmath,amssymb,amscd,amsthm}
\usepackage{mathrsfs}
\usepackage{mathtools}
\usepackage{enumitem}
\usepackage{cite}

%%%%%%%%%%%% Graphic %%%%%%%%%%%%%%%
\usepackage{graphicx}
\usepackage{xcolor}

\usepackage[colorlinks,
	linkcolor={black!30!blue},
	citecolor={black!30!blue},
	urlcolor={black!30!blue}
]{hyperref}

\usepackage{tikz}
\usetikzlibrary{calc,
	matrix,
	fadings,
	fit,
	math,
	arrows.meta}
\usepackage{pgfplots}
\pgfplotsset{compat=1.18}
\usepackage{tikzscale}

\usepackage{bbm}
\usepackage[margin=1.1in]{geometry}

\def\idty{\mathbbm{1}} %new identity operator

% bold letters
\newcommand{\bbC}{{\mathbb{C}}}

\newcommand{\bbN}{{\mathbb{N}}}

\newcommand{\bbS}{{\mathbb{S}}}
\newcommand{\bbT}{{\mathbb{T}}}
\newcommand{\bbZ}{{\mathbb{Z}}}

% Cal. Letters

\newcommand{\CH}{{\mathcal{H}}}

% Real and Imaginary parts

% Matrix Groups and Algebras

\newcommand{\SU}{{\mathbb{SU}}}

%THEOREM ENVIRONMENTS
\newtheorem{theorem}{Theorem}[section]
\newtheorem{lemma}[theorem]{Lemma}

\newtheorem{coro}[theorem]{Corollary}

\newtheorem{remark}[theorem]{Remark}

\theoremstyle{definition}

\sloppy

\allowdisplaybreaks \numberwithin{equation}{section}

\DeclareMathOperator{\dom}{Dom}

%center the subsections
\makeatletter
\def\subsection{\@startsection{subsection}{2}%
	\z@{.5\linespacing\@plus.7\linespacing}{.5\linespacing}%
	{\normalfont\scshape\centering}}
\makeatother

% metadata for hyperref
\hypersetup{
	pdftitle = {Exponential Suppression of Transport in Electric Quantum Walks},
	pdfauthor = {Houssam Abdul-Rahman,
		Christopher Cedzich,
		Albert H. Werner,
		Günther Stolz},
	pdfsubject = { },
	pdfkeywords = {ballistic transport, periodic field, quantum walk, CMV matrix, velocity	}
}

\begin{document}
	
\title[Exponential Suppression of Transport in Electric Quantum Walks]{Exponential Suppression of Transport\\in Electric Quantum Walks}
		
\author[H.\ Abdul-Rahman]{Houssam Abdul-Rahman}
\address{[H.\ Abdul-Rahman] Department of Mathematical Sciences, United Arab Emirates University, AL Ain, UAE}
\email{\href{mailto:houssam.a@uaeu.ac.ae}{houssam.a@uaeu.ac.ae}}
	
\author[C.\ Cedzich]{Christopher Cedzich}
\address{[C.\ Cedzich] Fakult\"at f\"ur Mathematik und Informatik, FernUniversit\"at in Hagen, Universit\"atsstr. 1, 58097 Hagen, Germany}
\address{Heinrich Heine Universit\"at D\"usseldorf, Universit\"atsstr. 1, 40225 D\"usseldorf, Germany}
\email{\href{mailto:christopher.cedzich@fernuni-hagen.de}{cedzich@fernuni-hagen.de}}

\author[G. Stolz]{Günter Stolz}
\address{[G. Stolz] Department of Mathematics, University of Alabama at Birmingham, Birmingham, AL 35294, USA}
\email{\href{mailto:stolz@uab.edu}{stolz@uab.edu}}

\author[A.~H. Werner]{Albert H. Werner}
\address{[A.~H. Werner], {QMATH}, Department of Mathematical Sciences, University of Copenhagen, Universitetsparken 5, 2100 Copenhagen, Denmark}
\email{\href{mailto:werner@math.ku.dk}{werner@math.ku.dk}}

\keywords{}
	
\begin{abstract}
	We establish exact scalings for the maximal group velocity of translation-invariant quantum walks in periodic electric fields. Our main result shows that the maximal group velocity decays exponentially with the period of the field in the whole parameter range, thus affirming a conjecture of \cite{ARS2023-CMP} and at the same time augmenting it to an exact equality. 
	We further demonstrate explicit revival relations and characterize the absolutely continuous spectrum in these models. 
	Our results apply directly also to generalized CMV matrices.
\end{abstract}
	
\maketitle
	
\hypersetup{
	linkcolor={black!30!blue},
	citecolor={black!30!blue},
	urlcolor={black!30!blue}
}

%%%%%%%%%%%%%%%%%%%%%%%%%%%%%%%%%%%%%%%%%%%%%%%%%%%%%%%%%%%%%%%%%%%	
\section{Introduction}
%%%%%%%%%%%%%%%%%%%%%%%%%%%%%%%%%%%%%%%%%%%%%%%%%%%%%%%%%%%%%%%%%%%

Periodic quantum systems are commonly associated with ballistic transport, where states propagate linearly in time  \cite{aschMotionPeriodicPotentials1998,damanikWhatBallisticTransport2024}. An intriguing and fundamental question arises naturally: is it possible to manipulate the velocity by suitably tuning the system's (periodic) parameters? Addressing this question is the primary theme and driving motivation of the present work.

The physical setup in which we study this question are so-called quantum walks, which have emerged as powerful models for studying quantum dynamics.
Originally introduced as quantum analogues of classical random walks, quantum walks capture essential aspects of quantum coherence and interference. In folklore wisdom, this makes them propagate ballistically compared to the diffusive spread of their classical counterparts \cite{Kempe2003, VenegasAndraca2012}. They have since found numerous applications as subroutines in quantum algorithms \cite{Portugal2013,  Montanaro2016}, as quantum simulators \cite{Childs2009, AspuruGuzik2012}, and in the study of topological phenomena \cite{kitagawaExploringTopologicalPhases2010,TopClass,AsboBB,Asbo} where new invariants arise due to the discrete nature of time. 

From a mathematical perspective, quantum walks offer a rich environment for rigorous spectral and dynamical analysis, connecting features of discrete quantum systems to harmonic analysis \cite{recurrence,bourgainQuantumRecurrenceSubspace2014,WeAreSchur,krushchev}, operator theory \cite{krushchev,OldIndex,F2W,TopClass}, and localization phenomena \cite{HJ14,HJS2009, AhlbrechtVolkherScholzWerner2011,JM10, HamzaJoyeStolz2006, hamzaLyapunovExponentsUnitary2007,schaefer2025dynamicallocalizationtransportproperties,locQuasiPer}. Particularly noteworthy is the connection between so-called split-step quantum walks and CMV matrices \cite{BHJ2003,CGMV2012QIP}, which are five-diagonal unitary matrices that play a fundamental role in the theory of orthogonal polynomials and can be thought of as a unitary counterpoint of Jacobi matrices \cite{canteroFivediagonalMatricesZeros2003,Simon2005OPUC1,Simon2005OPUC2}. 

Quantum walks subject to external fields have been extensively studied \cite{ARS2023-CMP, CGWW2019JMP,CFO1,CFGW2020LMP,Wojcik,Buerschaper}. In the simplest case, electric fields in one spatial dimension, the dynamics are highly sensitive to the field's arithmetic nature \cite{ewalks,locQuasiPer}: For rational electric fields, the system is effectively periodic, which results in ballistic transport at large time scales. Surprisingly, on timescales of the order of the denominator of the field, ``revivals'' occur: the initial state is reproduced with an accuracy that is exponentially good in the period of the system. On the other hand, irrational electric fields produce complex dynamics that depend on their arithmetic properties: well-approximable fields produce hierarchical dynamics while generic fields lead to Anderson localization characterized by pure point spectrum and exponentially localized eigenfunctions \cite{locQuasiPer}.

In this paper, we establish an exact scaling for the maximum group velocity of quantum walks in rational electric fields which can be thought of a single-particle equivalent of the Lieb-Robinson velocity \cite{extail,sigalPropagationInformationQuantum2024}. We prove that this \emph{maximal (group) velocity} of split-step walks in rational electric fields decays exponentially in the period $m$ of the field. More precisely, we show that the maximal velocity is exactly $|a|^m$, where $a$ with $|a|<1$ is the transmission coefficient characterizing the walk. This bound extends the previous results from \cite{ARS2023-CMP} significantly, where the proof techniques limited the validity to an upper bound for transmission coefficients $|a|<1/4$ only.
We use a conjunction of the Fourier techniques from \cite{ewalks} and a generalized ``sieving'' technique which states that the product of any two shift-coin walks amounts to a direct sum of split-step walks.

Our results complement recent works on discrete Schrödinger operators with periodic potentials \cite{ADFS2024, ARFFW}. The velocity bounds established there demonstrate that the maximal velocity, while positive, can be made arbitrarily small by controlling the potential. Our findings similarly highlight that careful periodic modulation can systematically control and suppress transport velocities, offering significant insights into engineering quantum systems with tailored dynamical properties.

The paper is organized as follows. Section \ref{sec:Model-results} introduces quantum walks in the presence of electric fields and presents our main results on maximal velocity, revival relations, and spectral properties (Theorem \ref{thm:speed_limit_W}). These results are proved in Section \ref{sec:proof:main} using a Fourier transform and the sieving method developed in Theorem \ref{thm:sieving}. Section \ref{sec:velocity-sum} collects crucial intermediate results on velocity bounds for powers and direct sums of walks, that are used in the proof of Theorem \ref{thm:speed_limit_W}. Finally, the conjecture in \cite{ARS2023-CMP} that we prove is written in the setting of CMV matrices. To make explicit the connection between our results and those of \cite{ARS2023-CMP}, we discuss the connection between quantum walks and generalized extended CMV matrices in the Appendix.

%%%%%%%%%%%%%%%%%%%%%%%%%%%%%%%%%%%%%%%%%%%%%%%%%%%%%%%%%%%%%%%%%%%		
\section{Model and results}\label{sec:Model-results}
%%%%%%%%%%%%%%%%%%%%%%%%%%%%%%%%%%%%%%%%%%%%%%%%%%%%%%%%%%%%%%%%%%%
\subsection{Quantum walks}\label{sec:QWs}
%%%%%%%%%%%%%%%%%%%%%%%%%%%%%%%%%%%%%%%%%%%%%%%%%%%%%%%%%%%%%%%%

We consider quantum walks on the one-dimensional lattice with a two-dimensional internal degree of freedom. These systems are modeled as unitary operators on $\CH=\ell^2(\bbZ)\otimes\bbC^2$, where the two factors represent the position space and the two internal degrees of freedom of the quantum particle, respectively.
We label the basis of $\CH$ as $\delta_{n}^{s}=\delta_{n}\otimes e_{s}$ for $n\in\bbZ$ and $s\in\{+,-\}$
with the ordering 
\begin{equation}\label{def:ordering}
	\dots,\delta_{n-1}^{-},\delta_{n}^{+},\delta_{n}^{-},\delta_{n+1}^{+},\delta_{n+1}^{-},\dots
\end{equation}	
where $\{\delta_n:n\in\bbZ\}$ and $\{e_+=[1,0]^\top,e_-=[0,1]^\top\}$ are the standard bases of $\ell^2(\bbZ)$ and $\bbC^2$. 
On $\CH$, we define two types of unitary operators:
\begin{itemize}
	\item[(a)] \emph{state-dependent shift operators} $S_\pm$ which are defined via the bilateral shift $T:\delta_n\mapsto\delta_{n+1}$ on $\ell^2(\bbZ)$ and the orthogonal projections $P_{\pm}=|e_\pm\rangle\langle e_\pm|$ on $\bbC^2$ as
		\begin{equation}\label{eq:shift_pm}
			S_\pm=T^{\pm1}\otimes P_\pm+\idty_{\bbZ}\otimes P_\mp.	
		\end{equation}
	\item[(b)] \emph{coin operators} $C$ on $\CH$ which act locally at position $n$ via a unitary $2\times2$ matrix
		\begin{equation}\label{eq:coin_loc}
				C(n)=\begin{bmatrix}a(n)&b(n)\\c(n)&d(n)\end{bmatrix},	
		\end{equation}
		where the unitarity of $C(n)$ implies that $|a(n)|=|d(n)|$ and $|b(n)|=|c(n)|$ for all $n\in\bbZ$.
\end{itemize}
\smallskip

A \emph{quantum walk} on $\CH$ with strictly finite jump width is a finite product of such shifts and coins. This construction is exhaustive: any banded unitary operator can be decomposed into such a finite product, both in the general case \cite{shiftcoin} as well as in the translation-invariant setting \cite{vogtsDiscreteTimeQuantum2009}.
	
The concrete systems we study are \emph{split-step} quantum walks that are defined via
\begin{equation}\label{eq:SSwalk}
	W=S_+C_1S_-C_2,
\end{equation}
where the coins $C_j$ act locally as $C_j(n)$.
A special type of split-step walks is obtained by setting the first coin operator to the identity operation, i.e. $C_1=\idty$. The resulting \emph{shift-coin} walk has the form
\begin{equation}\label{eq:SCwalk}
	U=SC,
\end{equation}
where $S=S_+S_-$ and $C= C_2$. 
Throughout this manuscript, we consistently denote split-step walks by $W$ and reserve $U$ for shift-coin walks\footnote{The motivation for this notation is the following: In Theorem \ref{thm:sieving} below we will see that two time steps with a shift-coin walk (``double-$U$'') are related to a single step with a split-step walk ($W$).}. When the distinction is not essential, we shall denote general quantum walks on $\mathcal{H}$ simply by $W$.

We say that a walk $W$ or a coin $C$ is \emph{translation-invariant} if $W$ or $C$ commutes with lattice translations. In this setting, the local coins are independent of position, i.e., $C(n)\equiv C(0)$ for all $n\in\bbZ$. Without loss of generality, we omit a global phase such that $\det(C(n)) = 1$, which leads to
\begin{equation}\label{def:C}
	C(n) = \begin{bmatrix}
		a & b \\ -\bar{b} & \bar{a}
	\end{bmatrix},\qquad a, b \in \bbC;\ |a|^2 + |b|^2 = 1.
\end{equation}
A popular choice is the \emph{Hadamard coin} $C_H$ with $a=b=1/\sqrt2$.

In the translation-invariant setting, the dynamics is (partially) diagonalized by the Fourier transform 
\begin{equation}
	(\mathscr{F}\psi)(\theta) = \frac{1}{\sqrt{2\pi}} \sum_{n \in \bbZ} e^{-i\theta n} \psi(n),
\end{equation}
where $\psi(n)$ is an element of the local Hilbert space $\bbC^2$ at $n\in\bbZ$. It is straightforward to check that a translation-invariant walk $W$ becomes a matrix-valued multiplication operator in Fourier space, that is,
\begin{equation}
	(\mathscr{F} W \mathscr{F}^{-1}\hat\psi)(\theta) = \hat{W}(\theta)\hat\psi(\theta).
\end{equation}
Here, $\hat\psi(\theta)$ is $\bbC^2$-valued and $\hat W(\theta)$ is a unitary $2\times2$ matrix for every $\theta\in\bbT:=[0,2\pi)$. 
We can further diagonalize the unitary matrices $\hat W(\theta)$ pointwise which yields
\begin{equation}\label{eq:eigendecomposition}
	\hat W(\theta)=\sum_{s=\pm}e^{i\omega_s(\theta)}P_s(\theta),
\end{equation}
where the \emph{dispersion relations} $\omega_s(\theta)$ determine the eigenvalues and the $P_s(\theta)$ are (rank-1) orthogonal eigenprojections.
Below, the dispersion relations play a central role in determining the velocity of translation-invariant walks.

%%%%%%%%%%%%%%%%%%%%	
\subsection{Electric fields}
%%%%%%%%%%%%%%%%%%%%
We introduce electric fields into quantum walks via the discrete minimal coupling scheme described in \cite{CGWW2019JMP}, where electric fields appear as commutation phases between discrete-time shifts and lattice translations. In the present setting on $\bbZ$, these commutation phases can be implemented as multiplication by a position-dependent phase \cite{CGWW2019JMP,ewalks}.
We only consider homogeneous and static electric fields $\Phi\in\bbT:=[0,2\pi)$ which are implemented by the diagonal unitary operators
\begin{equation}\label{field}
	F_\Phi=e^{i\Phi Q}, \qquad\tilde{F}_\Phi=\left(\idty_\bbZ\otimes \begin{bmatrix}1 & 0\\ 0 & e^{i\Phi}\end{bmatrix}\right) F_{2\Phi}
	=e^{i\frac{\Phi}{2}} e^{-i\frac{\Phi}{2}(\idty_\bbZ\otimes \sigma_3)} F_{2\Phi},
\end{equation}	
where $Q \delta_n^{\pm}=n\delta_n^{\pm}$ is the position operator on $\mathcal{H}$, and $\sigma_3$ is the third Pauli matrix. Note that the global phase $e^{i\Phi/2}$ can be dropped without loss of generality.

We mainly consider ``rational electric fields'' $\Phi=2\pi n/m$ with $n,m\in\bbZ$ coprime such that $e^{i\Phi}$ is a primitive $m^{\text{th}}$ root of unity. 
Since $\Phi/2\pi=n/m$ is rational, we refer to $F_{\Phi}$ and $\tilde{F}_\Phi$ as \emph{rational electric fields}. 
Throughout the paper, we mostly refer simply to ``the electric field $\Phi$'' whenever is clear from the context whether $F_\Phi$ or $\tilde{F}_\Phi$ is intended. 

We incorporate these rational electric fields into shift-coin and split-step quantum walk dynamics by defining
\begin{equation}\label{eq:electrify}
U_{\Phi}:=F_\Phi U=F_\Phi S C \quad\text{ and }\quad W_\Phi:=\tilde{F}_{\Phi} W=\tilde{F}_{\Phi} S_+ C S_- C,
\end{equation}
respectively. Dynamical and spectral properties of the electric shift-coin $U_\Phi$ have been explored previously in \cite{CGWW2019JMP, ewalks, locQuasiPer}. The crucial observation there was that, while $U_\Phi$ itself is not, the temporally regrouped walk $U_\Phi^m$ is translation-invariant.
This periodic structure of $U_\Phi^m$ leads to revivals in the dynamics under $U_\Phi$: in \cite[Theorem 1]{ewalks} it is established that
\begin{equation}\label{U-revivals}
	\left\|U_\Phi^{2m} + \idty\right\| = 2|a|^m \quad \text{for $m$ odd,}\qquad
	\left\|U_\Phi^{m} + (-1)^{m/2}\idty\right\| = 2|a|^{m/2} \quad \text{for $m$ even}.
\end{equation}
This ``revival theorem'' implies that $U_\Phi$ reproduces any initial state periodically up to an error that is exponentially small in $m$ such that the walker exhibits very little net transport.

The observation that the temporally regrouped walk commutes with the lattice translations opens up the toolbox of Fourier techniques. These tools are standard in the study of quantum walks (see, e.g.,~\cite{ambainisOnedimensionalQuantumWalks2001,grimmettWeakLimitsQuantum2004,ahlbrechtAsymptoticEvolutionQuantum2011,ewalks}). The proof of the revival theorem hinges on the observation that the dispersion relations of $U_\Phi^m$ can be explicitly calculated \cite{ewalks} and are given as $\omega_\pm(\theta,m)$ where
\begin{equation}\label{def:omega_intro}
	\cos(\omega_\pm(\theta,m)) =
	\begin{cases}
		|a|^m\cos(m(\theta+\arg(a))), & m \text{ odd}, \\[3pt]
		-|a|^m\cos(m(\theta+\arg(a))) + (-1)^{m/2}(|a|^m-1), & m \text{ even}.
	\end{cases}
\end{equation}
Note that in the special case here $\omega_-=-\omega_+$ because $\hat W(\theta) \in\SU(2)$ by \eqref{def:C}.
This implies immediately that $\text{spec}(U_\Phi)$ is absolutely continuous and consists of $2m$ bands. The dispersion relations depend explicitly on the coin parameter $a$; we suppress this dependence to simplify notation.

On the other hand, the electric split-step walk $W_\Phi$ with translation-invariant coins is precisely the model studied in~\cite{ARS2023-CMP}. Note that $\tilde F_\Phi$ is not precisely an electric field as defined in \cite{CGWW2019JMP} due to the additional $\exp[-i\Phi(\idty_\bbZ\otimes \sigma_3)/2]$. However, this translation-invariant coin can be absorbed and $W_\Phi$ can be written (up to a global phase $\exp[i\Phi/2]$) as
\begin{equation*}
	W_\Phi = F_{2\Phi} S_+ \tilde{C} S_- C, \quad \text{where} \quad \tilde{C} := e^{-i\frac{\Phi}{2}(\idty_\bbZ\otimes \sigma_3)} C.
\end{equation*}
This justifies calling $W_\Phi$ an electric walk. Note, that $2\Phi/(2\pi)=2m/n$ is reduced only if $n$ is odd.

%%%%%%%%%%%%%%%%%%%%%%%%%%%%%%%%%%%%%%%%%%%%%%%%%%%%%%%%
\subsection{The speed limit of electric quantum walks}
%%%%%%%%%%%%%%%%%%%%%%%%%%%%%%%%%%%%%%%%%%%%%%%%%%%%%%%%%%%%

A fundamental question for quantum walks is how fast information can propagate under the given dynamics, especially when external fields are present.
Quantum walks model the discrete-time dynamics of single particles. Thus, in $t \in \mathbb{N}$ timesteps, an initial state $\psi$ evolves under an arbitrary quantum walk $W$ as $\psi\mapsto W^t\psi$.
Moreover, let us denote by $Q(t):=W^{-t}QW^t$ the position operator at time $t$ in the Heisenberg picture. 
The velocity of a (normalized) state  $\psi\in\mathcal H$ with respect to $W$ is given as
\begin{equation}\label{def:v-psi0}
v(W,\psi):= \limsup_{t \to \infty} \frac{1}{t} \| Q(t)\psi\|\equiv\limsup_{t \to \infty} \frac{1}{t} \| QW^{t}\psi\|.
\end{equation}
Then we define the \emph{maximal velocity} as the supremum over all states $\psi \in \dom(Q)$ of the asymptotic expected position in ballistic scaling (recall that $Q$ is the position operator on $\mathcal H$):
\begin{equation}\label{def:v}
v(W) := \sup_{\substack{\psi \in \dom(Q)\\ \|\psi\|=1}} v(W,\psi).
\end{equation}
Clearly, for any $t \in \mathbb{N}$ the domain of $Q W^t$ contains all finitely supported states. Moreover, it is straightforward to check that $\dom(Q)$ is invariant\footnote{This follows from checking that if $\psi\in\dom(Q)$ then $C_j \psi$ and $S_\pm \psi$ are in $\dom(Q)$.} under $W$, that is, if $\psi\in\dom(Q)$ then $W^t \psi\in\dom(Q)$ for any $t\in\mathbb{N}$.

In \cite[Theorem 3.2]{ARS2023-CMP} it is proved that the velocity of any initial state $\Psi=\delta_0\otimes\psi(0)$ localized at $0$ with $\psi(0)\in\bbC^2$ normalized is bounded from above by 
\begin{equation}\label{eq:bound_HG}
v(W_\Phi,\Psi)\leq (4|a|)^m.
\end{equation}
This means that the velocity of $\Psi$ decays exponentially in $m$ whenever $|a|\in(0,1/4)$ whereas the bound becomes trivial for all other coin parameters.
Numerics led the authors of \cite{ARS2023-CMP} to conjecture that the exponential decay of the velocity holds for all $|a|\in[0,1]$. Part (a) of our main result implies this conjecture: it provides an exact bound for the velocity of \emph{any} initial state in the domain of $Q$ and, moreover, it improves the prefactor in \eqref{eq:bound_HG}:
%%%%
\begin{theorem}\label{thm:speed_limit_W}
Let $m,n\in\mathbb{N}$ be coprime and let $\Phi=2\pi n/m$. Consider the electric split-step walk $W_\Phi=\tilde{F}_{\Phi} S_+ C S_- C$ with translation-invariant coin $C$. Then, for $|a|\in[0,1]$ as in \eqref{def:C}:
\begin{enumerate}[label=\textrm{(\alph*)}]
\item The maximal velocity of  $W_\Phi$  is	
		\begin{equation}\label{eq:v:SSWalk}
			v(W_{\Phi})=|a|^m.
		\end{equation}
\item $W_\Phi$ satisfies the revival relation
		\begin{equation}\label{W-revivals}
		\left\| W_\Phi^m+(-1)^{m-n}\idty\right\|=2|a|^m.
		\end{equation}
\item The spectrum of $W_\Phi$ is absolutely continuous and given as
\begin{equation}\label{eq:spec_W}
\mathrm{spec}(W_\Phi)=
\bigcup_{\substack{\theta \in \bbT,\:s=\pm\\k=0,1,\ldots,2m-1}}\left\{e^{i \frac{1}{m}(\widehat\omega_s(\theta,m)+\pi k)} \right\} \text{ where }
\widehat\omega_s(\theta,m)=
\begin{cases}
\omega_s(\theta,2m), & $n$ \text{ odd},\\
2\omega_s(\theta,m), &  $n$ \text{ even},
\end{cases}
\end{equation}
and $\omega_\pm(\theta,m)$ is given by \eqref{def:omega_intro}.
\end{enumerate}
\end{theorem}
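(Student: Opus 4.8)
\emph{Overview and core reduction.} The plan is to transport everything that is known about the shift-coin walk $U_\Phi$ — whose regrouping $U_\Phi^m$ is translation-invariant with the explicit dispersion \eqref{def:omega_intro} — to the split-step walk $W_\Phi$ by means of the sieving identity of Theorem~\ref{thm:sieving}. Concretely, I would use that two steps of a shift-coin walk decompose, up to a fixed unitary (the sieve) and a direct-sum splitting, into split-step walks, so that $U_\Phi^2$ is unitarily equivalent to a direct sum of split-step walks one of which is conjugate to $W_\Phi$. This single structural identity feeds all three parts, and the arithmetic enters through the fact that $W_\Phi=F_{2\Phi}S_+\tilde C S_- C$ carries the \emph{doubled} field $2\Phi=2\pi(2n)/m$.

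\emph{Part (c).} First I would identify the translation-invariant regrouping of $W_\Phi$ and its bands. Since the effective field is $2\Phi$, the reducibility of $2n/m$ controls the outcome: when $n$ is odd the fraction is already in lowest terms and the sieve matches $W_\Phi^m$ to $U_\Phi^{2m}$, producing the band $\omega_s(\theta,2m)$; when $n$ is even a factor two cancels and one obtains $2\omega_s(\theta,m)$ instead. Passing $\hat W_\Phi(\theta)$ through the pointwise eigendecomposition \eqref{eq:eigendecomposition} and extracting the $2m$ branches of the $m$-th root then yields the set \eqref{eq:spec_W}, with absolute continuity inherited from the real-analytic, non-constant bands $\omega_s$.

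\emph{Parts (a) and (b).} With the bands in hand, part (a) follows from the velocity results of Section~\ref{sec:velocity-sum}: velocity is linear under taking powers and equals the maximum under direct sums, and for a translation-invariant walk it is the maximal group velocity $\sup_\theta|\omega'(\theta)|$. These reduce $v(W_\Phi)$ to $v(U_\Phi)$, which I would compute directly from \eqref{def:omega_intro}: with $\phi:=\theta+\arg(a)$, differentiating $\cos\omega_\pm=|a|^m\cos(m\phi)$ gives $|\omega_\pm'|=m|a|^m|\sin(m\phi)|\big/\sqrt{1-|a|^{2m}\cos^2(m\phi)}$, whose supremum is attained at $\cos(m\phi)=0$ and equals $m|a|^m$; dividing by the $m$ steps of the regrouping yields $v(U_\Phi)=|a|^m$, which the sieve's lattice identification carries to $v(W_\Phi)=|a|^m$. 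For part (b) I would read the revival off the same reduction: $W_\Phi^m$ is unitarily equivalent to a summand of the power of $U_\Phi$ appearing in \eqref{U-revivals}, so $\|W_\Phi^m+(-1)^{m-n}\idty\|$ equals the corresponding distance of $U_\Phi$'s regrouping to a scalar, namely $2|a|^m$, with the sign $(-1)^{m-n}$ being exactly the scalar that recenters the relevant spectral arc.

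\emph{Main obstacle.} The delicate part is constant-tracking rather than soft analysis. The earlier bound $(4|a|)^m$ arose from losing factors in the estimates, so obtaining the \emph{exact} value $|a|^m$ requires (i) the precise sieving identity of Theorem~\ref{thm:sieving} together with its position rescaling, ensuring that no spurious factor of $2$ or $4$ survives in the velocity, and (ii) the exact maximization of the group velocity — including its behaviour at the band-touching points — handled in tandem with the even/odd-$n$ dichotomy of the effective field and the centering sign $(-1)^{m-n}$ in the revival.
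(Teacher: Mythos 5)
Your overall architecture coincides with the paper's: sieve the square of an electric shift-coin walk into split-step summands, transfer velocities via the direct-sum and power lemmas of Section~\ref{sec:velocity-sum}, and read off bands and revivals from \eqref{def:omega_intro} and \eqref{U-revivals}. However, there is a genuine error in the field arithmetic at the core of your reduction. Squaring $U_\Phi=F_\Phi U$ gives $U_\Phi^2=e^{-i\Phi(\idty\otimes\sigma_3)}F_{2\Phi}U^2$, and restricting to the even/odd sublattices relabels positions $2n\mapsto n$, which doubles the field a \emph{second} time (cf.\ \eqref{dec:F}): the summands of $U_\Phi^2$ are, up to phases, $\tilde F_{2\Phi}W=W_{2\Phi}$, not $W_\Phi$. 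So your claim that ``$U_\Phi^2$ is unitarily equivalent to a direct sum of split-step walks one of which is conjugate to $W_\Phi$'' is false in general; to land on $W_\Phi$ one must start from the \emph{half}-field walk, $U_{\Phi/2}^2=\left(e^{-i\Phi/2}W_\Phi\right)\oplus\left(e^{i\Phi/2}\tilde F_\Phi\tilde W\right)$, which is exactly the paper's Corollary~\ref{coro:sieving}.

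This is not a cosmetic slip; it is where the theorem's content lives. The half field $\Phi/2=2\pi n/(2m)$ has fundamental period $\ell=2m$ for $n$ odd and $\ell=m$ for $n$ even, i.e.\ the dichotomy is governed by $\gcd(n,2)$ — not by the reducibility of $2n/m$, which depends on the parity of $m$ (since $\gcd(2n,m)=\gcd(2,m)$), contrary to the reasoning in your part (c). This $n$-parity split is the source of the case distinction in parts (b) and (c) and of the uniform answer in part (a): for $n$ odd one must apply the \emph{even}-period formulas to $U_{\Phi/2}$, giving $v(U_{\Phi/2})=|a|^{\ell/2}=|a|^m$ and revival error $2|a|^{\ell/2}=2|a|^m$, while for $n$ even one applies the odd-period ones. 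Under your identification (relating $W_\Phi$ to $U_\Phi$ of period $m$) you would instead obtain $v(W_\Phi)=|a|^{m/2}$ and revival error $2|a|^{m/2}$ whenever $m$ is even, contradicting the theorem; moreover your explicit maximization only treats the odd branch of \eqref{def:omega_intro}, whereas the correct proof needs the even branch precisely when $n$ is odd. Two further points you flag but do not close: the direct-sum velocity lemma must carry a factor $2$ (Lemma~\ref{lem:W-W1-W2}: $v(W_1\oplus W_2)=2\max\{v(W_1),v(W_2)\}$, because the even/odd embedding dilates positions), which then cancels against $v(U_{\Phi/2}^2)=2v(U_{\Phi/2})$; and the summand phases in the correct reduction are $e^{\mp im\Phi/2}=(-1)^n$, which is what actually produces the sign $(-1)^{m-n}$ in \eqref{W-revivals} rather than a generic ``recentering'' argument.
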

%%%%%%
Thus, the walker's ballistic spread becomes exponentially suppressed for all $|a| \in (0,1)$. Note that $v$ is an asymptotic quantity that does not establish a bound for finite times, see Figure \ref{fig:revivals}. However, it is known that the faster-than-light contributions, so the contributions with velocities larger than $v$, decay exponentially in time \cite{extail}.
\begin{remark}
	An upper bound on $v(W)$ automatically implies an upper bound on the asymptotic standard deviation of the time-evolved position operator in ballistic scaling: 
	\begin{equation}
		\sigma_{\psi}(Q(t)):=\sqrt{\langle \psi, Q(t)^2 \psi \rangle - \langle \psi, Q(t) \psi \rangle^2 } \leq \sqrt{\langle \psi, Q(t)^2 \psi \rangle } = \| Q(t) \psi \|.
	\end{equation}
	Thus, bounding the maximal velocity also provides control over the rate of quantum spreading, or uncertainty, in the system. See also Figure \ref{fig:revivals}.
\end{remark}

Theorem \ref{thm:speed_limit_W} is proved in Section \ref{sec:proof:main}.
While $\tilde{F}_\Phi (T^{\pm 1}\otimes \idty)=e^{2i\Phi} (T^{\pm 1}\otimes \idty) \tilde{F}_\Phi$ and thus $W_\Phi^{2m}$ is translation-invariant, a direct approach via Fourier technics as for the electric shift coin walk $U_\Phi$ in \cite{ewalks} does not seem feasible: there, the structure of the multiplication operator $U_\Phi^{2m}$ resp. $U_\Phi^m$ in Fourier space played a significant role in determining the exact form of the dispersion relations \eqref{def:omega_intro}. The Floquet matrix of $W_\Phi^{2m}$ does not posses this structure, which seems to render its dispersion relations inaccessible. For the proof of Theorem \ref{thm:speed_limit_W} we therefore embark a different, less direct approach: the key observation that $U_{\Phi}^2$ acts as a two-step walk suggests decomposing the position space into even and odd lattice sites. Then $U_{\Phi}^2$ decomposes as a direct sum of two split-step walks which allows us to establish a relation between the velocities of $U_\Phi$ and $W_\Phi$.

\begin{figure}[h]
	\begin{center}
		\includegraphics[width=.9\textwidth]{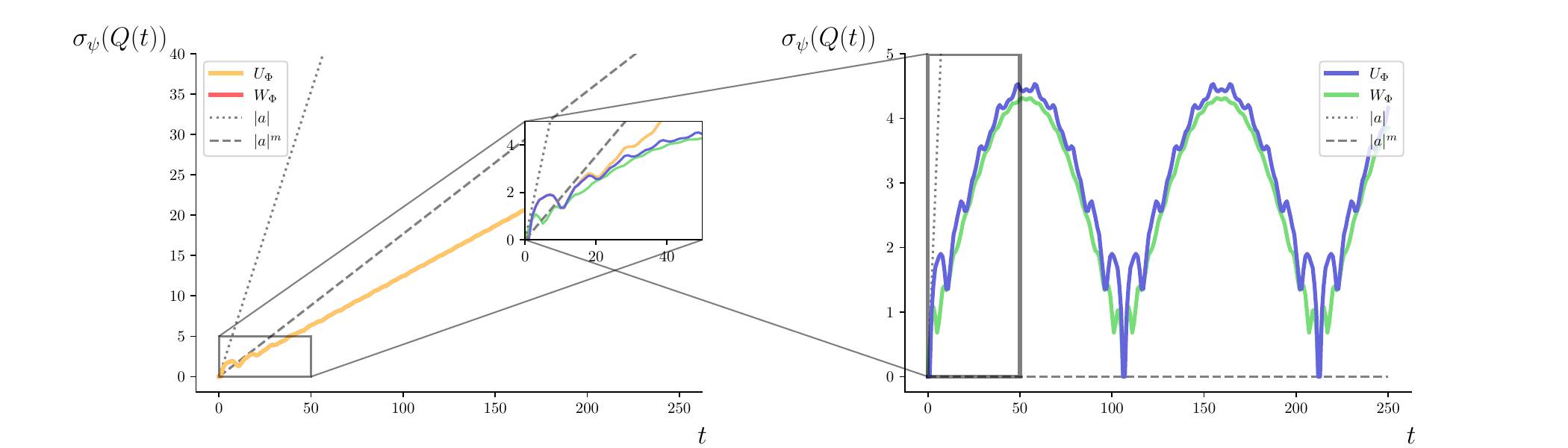}
	\end{center}
	\caption{\label{fig:revivals}The standard deviation of the dynamics under $U_\Phi$ and $W_\Phi$ with the Hadamard coin $C=C_H$ for $\Phi/(2\pi)=1/5$ (left) and $\Phi/(2\pi)=21/106=[0,5,21]$ (right) with initial state $\psi=|0\rangle\otimes[1,i]^\top/\sqrt{2}$. As the inset shows, the dynamics is initially very similar until about the order of $t=20$ where the errors committed in each revival accumulated enough such that the next term in the continued fraction expansion kicks in, see \cite{ewalks} for a thorough explanation of the interplay between the continued fraction expansion of $\Phi$ and the revivals.}
\end{figure}

The revival formula~\eqref{W-revivals} demonstrates that the system evolving under $W_\Phi$ returns close to its initial state, after every $m$ steps up to an error decaying exponentially in $m$. On the one hand, this mirrors the revivals in the shift-coin case. However, for $m$ odd $W_\Phi$ exhibits revivals after $m$ time steps, whereas the revivals for $U_\Phi$ occur only after $2m$ steps, see Figure \ref{fig:revivals}. These revivals underpin the exponential suppression of the maximal velocity and highlight the fundamental role of periodicity in suppressing ballistic transport.

As a direct consequence of the revival relation we conclude that irrational fields that are well-approximable in terms of their continued fraction expansion lead to hierarchical motion: an alternation between farther and farther excursion and better and better revivals on the time scales of the denominators of their continued fraction approximants. As argued in \cite{ewalks}, the excursions preclude point spectrum while the revivals preclude absolutely continuous spectrum via the Riemann-Lebesgue lemma, wherefore the spectral type of such walks must be singular continuous. On an arithmetic level, one can show this for example for Lebesgue fields via a Gordon-type argument, see \cite{CedzichPhD}. In the typical case, that is, for a full measure set of fields, Anderson localization follows for $W_\Phi$ from the proof in \cite{locQuasiPer} via sieving.

\begin{figure}[h]
	\def\a{1/sqrt(2)}
	\begin{center}
		\includegraphics[width=.9\textwidth]{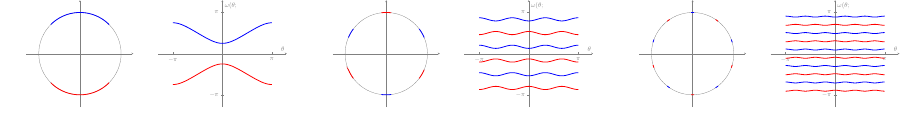}	
	\caption{\label{U-Phi-spectrum-m-odd}The spectra resp. the dispersion relations $\omega_\pm(\theta,m)$ of $U_\Phi$ for $m=1,3,5$ (left to right). The colors distinguish the different signs in \eqref{def:omega_intro}.}
	\end{center}
\end{figure}

\begin{figure}[h]
	\def\a{1/sqrt(2)}
	\begin{center}
		\includegraphics[width=.9\textwidth]{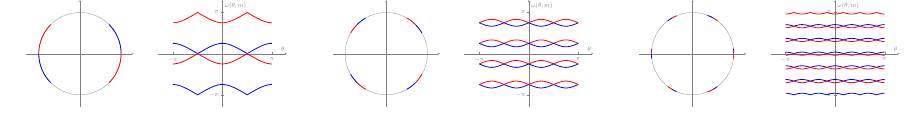}
	\end{center}
	\caption{\label{U-Phi-spectrum-m/2-odd}The spectra resp. the dispersion relations $\omega_\pm(\theta,m)$ of $U_\Phi$ for $m=2,4,6$ (left to right). The colors distinguish the different signs in \eqref{def:omega_intro}. Note that opposed to the case of $m$ odd, here the choices for $\omega_\pm$ are not analytic at the degenerate points. However, such an analytic choice is always possible in the present setting \cite{ahlbrechtAsymptoticEvolutionQuantum2011,ahlbrechtAsymptoticBehaviorDecoherent2013}.}
\end{figure}

%%%%%%%%%%%%%%%%%%%%%%%%%%%%%%%%%%%%%%%
\section{Proof of Theorem \ref{thm:speed_limit_W}}\label{sec:proof:main}
%%%%%%%%%%%%%%%%%%%%%%%%%%%%%%%%%%%%%%
In this section, we prove the main result, Theorem \ref{thm:speed_limit_W}, by exploiting the decomposition of the Hilbert space $\mathcal{H}=\ell^2(\bbZ)\otimes\bbC^2$ into two subspaces corresponding to even and odd lattice sites. This decomposition enables us to relate the dynamics of split-step and shift-coin walk: the product of two shift-coin operators decomposes as a direct sum of two split-step walks with respect to this even-odd decomposition.
Combining this with a bound on the maximal velocity of electric shift-coin walks that follows from their dispersion relations in \eqref{def:omega_intro} is the key to establishing the exact form of velocity bounds and revival phenomena for split-step walks in rational electric fields.

\subsection{The maximal velocity of $U_\Phi$}

In the setting where $W$ is an arbitrary translation-invariant quantum walk, it is well known (see, e.g., \cite[Theorem 4]{ahlbrechtAsymptoticEvolutionQuantum2011}, \cite[Theorem 9.4]{damanikSpreadingEstimatesQuantum2016} or \cite[Proposition 2.2]{extail}) that for any $\psi \in \dom(Q)$ there exists a bounded self-adjoint operator $V$ on $\mathcal{H}$ such that
\begin{equation}\label{eq:lim}
	\lim_{t\rightarrow\infty} \frac{1}{t}Q(t)\psi= V\psi.
\end{equation}
This \emph{(group) velocity operator} $V$ is explicitly given as 
\begin{equation}
	V:=\mathscr{F}^{-1}\left[\sum_s\partial_\theta\omega_s(\theta)P_s(\theta)\right]\mathscr{F},
\end{equation}
where $\omega_s$ and $P_s(\theta)$ are the dispersion relations and the eigenprojections from \eqref{eq:eigendecomposition}, and the eigenvalues $\partial_\theta\omega_s(\theta)$ in Fourier space are called ``group velocities'' \cite{extail}.
Equation~\eqref{eq:lim} shows that the $\limsup$ in the definition of the velocity \eqref{def:v} is in fact just a limit. Thus, the velocity of $W$ can be expressed as the maximum of the group velocities:
\begin{align}\label{eq:v:bound:w}
	v(W) &= \sup_{\substack{\psi\in \dom(Q)\\ \|\psi\|=1}} \lim_{t\to\infty}\frac{1}{t}\|W^{-t} Q W^t \psi\| =\sup_{\substack{\psi\in \mathcal{H}\\ \|\psi\|=1}}\|V\psi\|=\max_{s=\pm,\ \theta\in\bbT} |\partial_\theta\omega_s(\theta)|,
\end{align}
where we used the fact that $\dom(Q)$ is dense in $\mathcal{H}$.

As mentioned above, the electric shift-coin walk $U_\Phi$ with rational field $\Phi=2\pi m/n$ is itself not translation-invariant, so to apply the above we need to consider the velocity of the $m$-fold regrouped walk $v(U_\Phi^m)$. For this walk, the dispersion relations are given in \eqref{def:omega_intro}, and its speed is obtained by computing:
\begin{lemma}\label{lem:d-omega-bound}
	For the coin matrix $C$ in \eqref{def:C}, we have that
	\begin{equation}
		\max_{\theta\in\bbT}|\partial_\theta\omega_\pm(\theta,m)| = \begin{cases}
			m |a|^m, & m \text{ odd}, \\[3pt]
			m |a|^{m/2},  &  m \text{ even},
		\end{cases}
	\end{equation}
	with the dispersion relations $\omega_\pm(\theta,m)$ given in \eqref{def:omega_intro}.
\end{lemma}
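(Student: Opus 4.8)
The plan is to directly differentiate the implicit relation \eqref{def:omega_intro} that defines the dispersion relations. Writing $c(\theta):=\cos(\omega_\pm(\theta,m))$ for the right-hand side of \eqref{def:omega_intro}, I would start from the identity $\cos(\omega_\pm(\theta,m)) = c(\theta)$ and differentiate both sides in $\theta$, obtaining
\begin{equation*}
	-\sin(\omega_\pm(\theta,m))\,\partial_\theta\omega_\pm(\theta,m) = c'(\theta).
\end{equation*}
Using $\sin(\omega_\pm) = \pm\sqrt{1-c(\theta)^2}$, this yields the explicit formula
\begin{equation*}
	|\partial_\theta\omega_\pm(\theta,m)| = \frac{|c'(\theta)|}{\sqrt{1-c(\theta)^2}},
\end{equation*}
valid away from the branch points where $c(\theta)=\pm1$. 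The task then reduces to maximizing this quotient over $\theta\in\bbT$ in each of the two cases ($m$ odd and $m$ even).

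For $m$ odd, the right-hand side is $c(\theta)=|a|^m\cos(\varphi)$ with $\varphi:=m(\theta+\arg(a))$, so $c'(\theta)=-m|a|^m\sin(\varphi)$ and the quotient becomes $m|a|^m|\sin\varphi|/\sqrt{1-|a|^{2m}\cos^2\varphi}$. Setting $u:=\cos^2\varphi\in[0,1]$, I would analyze $g(u):=|a|^{2m}(1-u)/(1-|a|^{2m}u)$; since the numerator of the squared quotient is $m^2|a|^{2m}(1-u)$ and the denominator $1-|a|^{2m}u$ is decreasing in $u$ while the numerator decreases in $u$, a short monotonicity check shows the maximum is attained at $u=0$, i.e. at $\sin^2\varphi=1$, giving the value $m|a|^m$. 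For $m$ even the right-hand side has the same $\theta$-dependence $-|a|^m\cos\varphi$ plus a constant shift $(-1)^{m/2}(|a|^m-1)$, so $c'(\theta)$ has the identical form $m|a|^m\sin\varphi$, but the constant shift changes the denominator $\sqrt{1-c(\theta)^2}$. The key computation is then to check that the maximum of $|c'|/\sqrt{1-c^2}$ over $\theta$ now equals $m|a|^{m/2}$; I expect this to follow by again substituting $t:=\cos\varphi$ and reducing to a one-variable optimization, where the shift combines with the amplitude so that the extremal value carries the half-power $|a|^{m/2}$ rather than $|a|^m$.

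The main obstacle I anticipate is the $m$-even case: the additive constant in \eqref{def:omega_intro} breaks the clean factorization available for $m$ odd, so the one-variable maximization is genuinely different and one must verify carefully both the location of the maximizer and that the branch points $c(\theta)=\pm1$ (where the formula for $|\partial_\theta\omega_\pm|$ is a priori singular) do not actually produce an unbounded or larger value. At such degenerate points the eigenvalues collide, and one must confirm — e.g. by a Taylor expansion of $c(\theta)$ near its extrema, using that $c$ achieves $\pm1$ quadratically — that $c'(\theta)$ vanishes to matching order so that the quotient stays finite and the supremum is the claimed interior value. The remaining steps, substituting into the squared quotient and checking monotonicity in the auxiliary variable, are routine single-variable calculus.
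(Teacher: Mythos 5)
Your strategy---implicit differentiation of \eqref{def:omega_intro} followed by a change of variables and a one-variable monotonicity check---is exactly the approach taken in the paper, and your treatment of the odd case is complete and correct. However, for $m$ even you stop at ``I expect this to follow'': the half-power $|a|^{m/2}$ is precisely the nontrivial content of the lemma, and your plan never exhibits the substitution that produces it, so as it stands this is a genuine gap. The computation that closes it is short once you choose the right variable. With $\varphi:=m(\theta+\arg(a))$, set $y:=1-(-1)^{m/2}\cos\varphi\in[0,2]$, so that $\cos^2\varphi=(1-y)^2$ and, from \eqref{def:omega_intro}, $\cos(\omega_\pm(\theta,m))=(-1)^{m/2+1}\left(1-|a|^m y\right)$. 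Then
\begin{equation*}
	\sin^2\varphi = y(2-y), \qquad \sin^2(\omega_\pm(\theta,m)) = |a|^m y\left(2-|a|^m y\right),
\end{equation*}
and since $c'(\theta)=m|a|^m\sin\varphi$, your quotient becomes
\begin{equation*}
	\left(\partial_\theta\omega_\pm(\theta,m)\right)^2
	= \frac{m^2|a|^{2m}\sin^2\varphi}{\sin^2(\omega_\pm(\theta,m))}
	= m^2|a|^{2m}\,\frac{y(2-y)}{|a|^m y\left(2-|a|^m y\right)}
	= m^2|a|^{m}\,\frac{2-y}{2-|a|^m y},
\end{equation*}
which is decreasing in $y$ on $[0,2]$ and hence maximized at $y=0$ with value $m^2|a|^m$; taking square roots gives $m|a|^{m/2}$, as claimed.

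Note that this same cancellation of the factor $y$ also disposes of your branch-point worry, and in a simpler way than the Taylor expansion you propose: for $m$ even the bands do touch (at $y=0$ one has $|\cos(\omega_\pm)|=1$), but the zero of $\sin^2\varphi$ in the numerator cancels the zero of $\sin^2(\omega_\pm)$ in the denominator \emph{algebraically}, so the quotient extends continuously to the degenerate points---where, in fact, it attains its maximum (for an analytic choice of branches through these points, which is available here, the derivative is honestly attained there). For $m$ odd and $|a|<1$ there are no branch points at all, since then $|\cos(\omega_\pm(\theta,m))|\leq|a|^m<1$, so that part of your concern is vacuous.
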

\begin{proof}[Proof of Lemma~\ref{lem:d-omega-bound}]
	Differentiating the dispersion relations from \eqref{def:omega_intro} with respect to $\theta$ and solving for $\partial_\theta\omega_\pm(\theta,m)$ gives
	\begin{equation}
		\partial_\theta\omega_\pm(\theta,m) = (-1)^{m+1} \frac{m|a|^m \sin(m(\theta+\arg(a)))}{\sin(\omega_\pm(\theta,m))}.
	\end{equation}
	We square this expression and use \eqref{def:omega_intro} again. Considering the odd and even cases for $m$ separately yields
	\begin{description}
		\item[$m$ odd] Set $y_1(\theta):=\cos^2(m(\theta+\arg(a)))\in[0,1]$ such that
		\begin{equation}
			(\partial_\theta\omega_\pm(\theta,m))^2 = m^2 |a|^{2m} \frac{1-\cos^2(m(\theta+\arg(a)))}{1 - |a|^{2m}\cos^2(m(\theta+\arg(a)))}=m^2 |a|^{2m} \dfrac{1-y_1(\theta)}{1-|a|^{2m}y_1(\theta)}.
		\end{equation}
		\item[$m$ even] Set $y_2(\theta) := 1 - (-1)^{m/2} \cos(m(\theta+\arg(a)))\in[0,2]$. Then from~\eqref{def:omega_intro}
		\begin{equation}
			\cos(\omega_\pm(\theta,m)) = (-1)^{m/2+1}\left( 1 - |a|^m y_2(\theta) \right).
		\end{equation}
		Thus,
		\begin{equation}
			\sin^2(\omega_\pm(\theta,m)) = |a|^m y_2(\theta)\left(2 - |a|^m y_2(\theta)\right),
		\end{equation}
		and hence,
		\begin{equation}
			(\partial_\theta\omega_\pm(\theta,m))^2 = m^2 |a|^{m} \frac{2 - y_2(\theta)}{2 - |a|^m y_2(\theta)}.
		\end{equation}
	\end{description}
	The functions
	\begin{equation}
		f_1(y) = \frac{1-y}{1-|a|^{2m}y}, \qquad f_2(y) = \frac{2-y}{2-|a|^{m}y}
	\end{equation}
	are decreasing for all $y \in [0,2]$ and hence achieve their maximum value $1$ at $y=0$. 
\end{proof}
Together with Lemma \ref{lem:speed_power} below, this implies that the velocity of $U_\Phi$ is given by
\begin{equation}\label{U-v-bound}
	v(U_\Phi) =\frac1mv(U_\Phi^m)=
	\begin{cases}
		|a|^{m} & m \text{ odd},\\
		|a|^{m/2} & m \text{ even},
	\end{cases}
\end{equation}
where $|a|\in[0,1]$ is the coin parameter from~\eqref{def:C}.

%%%%%%%%%%%%%%%%%%%%%%%%%%%%%%%%%%%%%%%%%%%%%%%%
\subsection{Sieving split-step walks}
%%%%%%%%%%%%%%%%%%%%%%%%%%%%%%%%%%%%%%%%%%%%%%%%

Our methods to prove the main result in Theorem  \ref{thm:speed_limit_W} depend on the decomposition of the lattice over which $\CH$ is defined into even and odd labeled cells as
\begin{equation}\label{dec:H}
	\CH=\CH_e\oplus\CH_o:=\Big[\ell^2(2\bbZ)\otimes\bbC^2\Big]\oplus\Big[\ell^2(2\bbZ+1)\otimes\bbC^2\Big],
\end{equation}
where we choose bases $\{\delta^\pm_{e,n}\}$ and $\{\delta^\pm_{o,n}\}$ for $\CH_e$ and $\CH_o$, respectively, with ordering
\begin{equation}\label{eq:basis_W12_W12}
	(\delta^+_{e,n},\delta^-_{e,n}):=(\delta^+_{2n},\delta^-_{2n})\qquad
	(\delta^+_{o,n},\delta^-_{o,n}):=(\delta^+_{2n+1},\delta^-_{2n+1}).
\end{equation}		
\begin{remark}
For simplicity, we use the same symbols ($Q$, $W$, $S_\pm$, $C$, $\ldots$) to denote operators defined both on the full Hilbert space $\mathcal{H}$ and on its subspaces  $\mathcal{H}_e$ and $\mathcal{H}_o$. In each case, the symbol refers to the corresponding operator acting on the relevant space. This slight abuse of notation should cause no confusion, as the relevant domains will always be clear from the context.
\end{remark}

The following theorem shows that the square of a shift-coin walk decomposes as a direct sum of two split-step walks with respect to the even-odd decomposition in \eqref{dec:H}. This result, which we prove for completeness in Section \ref{sec:proof-sieving} below, is the translation of folklore knowledge in the literature of CMV matrices, see for example \cite{fillmanSpectralApproximationErgodic2017}.
\begin{theorem}\label{thm:sieving}
		Let $U=SC$, be a shift-coin walk on $\mathcal H=\ell^2(\bbZ)\otimes\bbC^2$. Then
		\begin{equation}\label{eq:U2=W+W}
			U^2=W\oplus \tilde W,
		\end{equation}
		where $W=S_+C_1S_-C_2$ and $\tilde W=S_+ \tilde C_1 S_- \tilde C_2$ are specified by the local coins
		\begin{equation*}
			 C_1(n)=C(2n+1),\quad C_2(n)=C(2n),\quad
			 \tilde C_1(n)=C(2n+2),\quad \tilde C_2(n)=C(2n+1),
		\end{equation*}		
		and the direct sum is with respect to even and odd lattice sites \eqref{dec:H}.
\end{theorem}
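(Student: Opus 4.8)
The plan is to verify \eqref{eq:U2=W+W} by a direct computation on the basis $\{\delta_n^\pm\}$, the only genuine structural input being that $U^2$ preserves the parity of the lattice site. First I would simplify the conditional shift: a one-line computation from \eqref{eq:shift_pm} gives $S=S_+S_-=T\otimes P_++T^{-1}\otimes P_-$, so that $S\delta_n^+=\delta_{n+1}^+$ and $S\delta_n^-=\delta_{n-1}^-$. Combined with the local action of the coin in \eqref{eq:coin_loc}, this yields the action of $U=SC$ on basis vectors,
\begin{equation*}
	U\delta_n^+=a(n)\,\delta_{n+1}^++c(n)\,\delta_{n-1}^-,\qquad U\delta_n^-=b(n)\,\delta_{n+1}^++d(n)\,\delta_{n-1}^-.
\end{equation*}

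Next I would square this. Applying $U$ once more to each of the four terms above expresses $U^2\delta_n^\pm$ as a linear combination of $\delta_{n+2}^+$, $\delta_n^+$, $\delta_n^-$ and $\delta_{n-2}^-$. The crucial observation is that every target site differs from $n$ by an even number, so $U^2$ leaves the even and odd sublattices invariant and is therefore block diagonal with respect to the decomposition $\CH=\CH_e\oplus\CH_o$ of \eqref{dec:H}. This is exactly what makes the claimed direct-sum structure possible.

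It then remains to identify the two blocks. On the even block, using the identification $\delta^\pm_{e,n}=\delta_{2n}^\pm$ from \eqref{eq:basis_W12_W12}, the coefficients of $U^2\delta_{2n}^\pm$ involve precisely the coin data $C(2n)$ and $C(2n\pm1)$. In parallel I would expand the split-step walk $W=S_+C_1S_-C_2$ on basis vectors — here $S_+$ shifts only the $+$ component to the right and $S_-$ shifts only the $-$ component to the left, so the four factors again produce a combination of $\delta_{n+1}^+$, $\delta_n^+$, $\delta_n^-$ and $\delta_{n-1}^-$ — and read off its matrix elements in terms of $C_1,C_2$. Substituting $C_1(n)=C(2n+1)$ and $C_2(n)=C(2n)$, the two sets of coefficients coincide term by term, which proves $U^2|_{\CH_e}=W$. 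The odd block is handled identically after shifting all indices by one, giving $\tilde C_1(n)=C(2n+2)$ and $\tilde C_2(n)=C(2n+1)$, hence $U^2|_{\CH_o}=\tilde W$.

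The computation carries no conceptual difficulty; the only thing to watch is the index bookkeeping, in particular the appearance of the neighbouring sites $2n\pm1$ in the even block (and $2n,2n+2$ in the odd block), which is exactly what forces the specific pairing of coins in the statement. The one structural point — parity preservation of $U^2$ — is what reduces the entire argument to matching four coefficients per basis vector, so I expect the main (mild) obstacle to be organizing these index substitutions cleanly rather than any real difficulty.
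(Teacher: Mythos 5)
Your proposal is correct and follows essentially the same route as the paper: an explicit expansion of $W=S_+C_1S_-C_2$ on the basis $\{\delta_n^\pm\}$ (the paper's Lemma~\ref{lem:W_basis_rep}), a direct computation of $U^2\delta_n^\pm$ showing that all target sites have the same parity as $n$, and a term-by-term matching of coefficients on the even and odd sublattices under the identification \eqref{eq:basis_W12_W12}. Your index bookkeeping, including the pairing $C_1(n)=C(2n+1)$, $C_2(n)=C(2n)$ on the even block and the shifted pairing on the odd block, agrees with the paper's computation.
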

Note that when $C$ is translation-invariant, \eqref{eq:U2=W+W} reads as $U^2=W\oplus W$ where $W=S_+ C S_- C$.

The introduction of electric fields depends on which quantum walk we regard as a fundamental shift in time direction \cite{CGWW2019JMP}. For example, it makes a structural difference whether we electrify $U$ or $U^2$ via \eqref{eq:electrify}:
Since $F_\Phi$ is block-diagonal with each block equal to a phase times the identity, it commutes with coins, whereas $SF_\Phi =(\idty_\bbZ\otimes e^{-i\Phi\sigma_3})F_\Phi S$, where $\sigma_3$ denotes the third Pauli matrix. This directly implies that for a shift-coin walk $U$
\begin{equation*}
	U F_\Phi =e^{-i\Phi(\idty_\bbZ\otimes \sigma_3)}F_\Phi U.
\end{equation*}
Thus, squaring an electric shift-coin walk $U_\Phi=F_\Phi U$ yields
\begin{equation}\label{eq:scsc_fields}
	U_{\Phi}^2=F_\Phi UF_\Phi U=e^{-i\Phi(\idty_\bbZ\otimes \sigma_3)} F_{2\Phi} U^2=e^{-i\Phi(\idty_\bbZ\otimes \sigma_3)} F_{2\Phi} \left(W\oplus \tilde W\right)
\end{equation}
where $W$ and $\tilde W$ are as in Theorem  \ref{thm:sieving}. To fully decompose $U_{\Phi}^2$ with respect to even and odd lattice sites, it remains to decompose the field $F_\Phi=e^{i\Phi Q}$.
Clearly, the position operator $Q$ on $\mathcal{H}$ decomposes as
\begin{equation}\label{dec:Q}
	Q=(2Q)\oplus (2Q+\idty).
\end{equation}
This implies that the field $F_\Phi$ in \eqref{field} decomposes with respect to \eqref{dec:H} as
\begin{equation}\label{dec:F}
F_\Phi=e^{i\Phi Q}=e^{2i\Phi Q}\oplus e^{i\Phi} e^{2i\Phi Q}=F_{2\Phi}\oplus e^{i\Phi} F_{2\Phi}.
\end{equation}
Substituting this expression into \eqref{eq:scsc_fields} and replacing $\Phi\mapsto\Phi/2$ yields the following corollary:
%%%%%%%%%%%%
\begin{coro} \label{coro:sieving}
Let $U_{\Phi}=F_{\Phi} U$ be a shift-coin walk subject to the electric field $F_\Phi=e^{i\Phi Q}$. Then
\begin{equation}\label{eq:decomp_electric_walks}
U_{\Phi/2}^2=\left(e^{-i\Phi/2} \tilde{F}_{\Phi} W\right)\oplus\left(e^{i\Phi/2}\tilde{F}_{\Phi} \tilde W\right)
\end{equation}
where $W$ and $\tilde W$ are the split-step walks as in Theorem \ref{thm:sieving} and the field $\tilde{F}_{\Phi}$ is given in \eqref{field}.
\end{coro}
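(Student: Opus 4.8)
The plan is to derive the corollary purely by bookkeeping from the three ingredients already in place: the squared electric shift-coin identity \eqref{eq:scsc_fields}, the even--odd splitting \eqref{dec:Q} of the position operator, and the definition \eqref{field} of the modified field $\tilde{F}_\Phi$. No new dynamical input is required; one only has to decompose every factor appearing in \eqref{eq:scsc_fields} with respect to $\CH=\CH_e\oplus\CH_o$ and recognize the resulting block-diagonal field factors as copies of $\tilde{F}$, after which the substitution $\Phi\mapsto\Phi/2$ finishes the proof.

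First I would start from $U_\Phi^2=e^{-i\Phi(\idty_\bbZ\otimes\sigma_3)}F_{2\Phi}(W\oplus\tilde W)$ as given by \eqref{eq:scsc_fields}. Both field factors are diagonal in position and therefore preserve the even--odd splitting, whereas $W$ acts on $\CH_e$ and $\tilde W$ on $\CH_o$ by Theorem \ref{thm:sieving}. The rotation $e^{-i\Phi(\idty_\bbZ\otimes\sigma_3)}$ keeps its form on each summand, and \eqref{dec:Q} gives $F_{2\Phi}=F_{4\Phi}\oplus e^{2i\Phi}F_{4\Phi}$ (i.e.\ \eqref{dec:F} read with $\Phi\mapsto2\Phi$), the extra scalar $e^{2i\Phi}$ appearing on the odd sublattice. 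Collecting factors summand by summand produces
\begin{equation*}
U_\Phi^2=\Big[e^{-i\Phi(\idty_\bbZ\otimes\sigma_3)}F_{4\Phi}\,W\Big]\oplus\Big[e^{2i\Phi}\,e^{-i\Phi(\idty_\bbZ\otimes\sigma_3)}F_{4\Phi}\,\tilde W\Big].
\end{equation*}

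The second step is to match the block-diagonal field factor to $\tilde F$. Reading \eqref{field} with $\Phi\mapsto2\Phi$ yields $\tilde{F}_{2\Phi}=e^{i\Phi}\,e^{-i\Phi(\idty_\bbZ\otimes\sigma_3)}F_{4\Phi}$, equivalently $e^{-i\Phi(\idty_\bbZ\otimes\sigma_3)}F_{4\Phi}=e^{-i\Phi}\tilde{F}_{2\Phi}$. Substituting this identity into both summands and simplifying the scalar phases via $e^{2i\Phi}e^{-i\Phi}=e^{i\Phi}$ gives $U_\Phi^2=(e^{-i\Phi}\tilde{F}_{2\Phi}W)\oplus(e^{i\Phi}\tilde{F}_{2\Phi}\tilde W)$. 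Finally, replacing $\Phi\mapsto\Phi/2$ throughout turns $\tilde{F}_{2\Phi}$ into $\tilde{F}_\Phi$ and the prefactors into $e^{\mp i\Phi/2}$, which is precisely \eqref{eq:decomp_electric_walks}.

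The argument carries no conceptual difficulty, so the only real risk — and the step I expect to be the main (if modest) obstacle — is the phase and coin-space bookkeeping. One must keep three separate half-angle/phase effects consistent: the scalar $e^{2i\Phi}$ generated on the odd sublattice by \eqref{dec:Q}, the coin rotation $e^{-i\Phi(\idty_\bbZ\otimes\sigma_3)}$ arising from commuting $F_\Phi$ past the shift in \eqref{eq:scsc_fields}, and the half-angle convention built into the definition \eqref{field} of $\tilde{F}_\Phi$. In particular the field appears at argument $2\Phi$ before the final substitution, so it is essential to verify the identification $e^{-i\Phi(\idty_\bbZ\otimes\sigma_3)}F_{4\Phi}=e^{-i\Phi}\tilde{F}_{2\Phi}$ against \eqref{field} at the correct argument; once this is checked the remaining simplifications are immediate.
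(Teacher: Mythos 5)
Your proposal is correct and follows exactly the paper's route: substitute the even--odd field decomposition \eqref{dec:F} (read at argument $2\Phi$) into \eqref{eq:scsc_fields}, recognize $e^{-i\Phi(\idty_\bbZ\otimes\sigma_3)}F_{4\Phi}=e^{-i\Phi}\tilde{F}_{2\Phi}$ from \eqref{field}, and replace $\Phi\mapsto\Phi/2$. The only difference is that the paper compresses this into one sentence, whereas you spell out the phase bookkeeping explicitly --- which is a virtue, not a gap.
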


This decomposition allows us to relate the velocities of the two types of walks: Lemma \ref{lem:W-W1-W2} below shows that the velocity of any walk that acts independently on even and odd lattice sites without mixing them is twice the maximum of the subsystem velocities. Moreover, it is intuitively clear that the speed of some $k^{th}$ power of a quantum walk is $k$-times the speed of the walk, see Lemma \ref{lem:speed_power} below. 
Applying these results to \eqref{eq:decomp_electric_walks} in the translation-invariant case yields, on the one hand
\begin{equation}\label{U2-W}
v\left(U_{\Phi/2}^2\right)=2\max\left\{v\left(e^{-i\Phi/2} W_{\Phi}\right), v\left(e^{i\Phi/2} W_{\Phi}\right)\right\}=2v\left(W_{\Phi}\right),
\end{equation}
and on the other (see Lemma \ref{lem:speed_power})
\begin{equation}\label{U2-2U}
v\left(U_{\Phi/2}^2\right)=2v\left(U_{\Phi/2}\right),
\end{equation}
which together imply that the velocities of $U_{\Phi/2}$ and $W_\Phi$ agree. This chain of arguments highlights why in the definition of $W_\Phi$ we use the field $\tilde F_\Phi$ instead of $F_\Phi$: Lemma \ref{lem:speed_power} only applies to powers of operators. If we would consider the split-step walk with field $F_\Phi$ instead of $\tilde F_\Phi$, the left side of \eqref{eq:decomp_electric_walks} we amount to $e^{i\Phi/2(\idty\otimes\sigma_3)}U_{\Phi/2}^2$ for which we have no handle on its velocity.

To prove Theorem \ref{thm:speed_limit_W} we exploit \eqref{eq:decomp_electric_walks}, where we note that for the electric field $\Phi=2\pi n/m$, the fundamental period of $\Phi/2$ on the left side is $\ell:=2m/\gcd(n,2)$. Since $n,m$ are coprime by assumption, it is easy to see that
\begin{equation}\label{eq:m-n}
	\ell=\begin{cases} 2m, &  n\text{ odd},\\ m, & n\text{ even},\end{cases}
\end{equation}
where in the second case $m$ is automatically odd.
Then part (a) of Theorem \ref{thm:speed_limit_W}, that is $v(W_\Phi)= |a|^{m}$ follows directly from \eqref{U-v-bound}.

To prove the revival relations \eqref{W-revivals} as well as part (c) of Theorem \ref{thm:speed_limit_W} about the spectrum of $W_\Phi$, we start from \eqref{eq:decomp_electric_walks}. It directly follows that
\begin{equation}\label{eq:UWPhi-revivals}
U_{\Phi/2}^{2m}-(-1)^{m+n}\idty=\left(e^{-im\Phi/2} W_\Phi^m-(-1)^{m+n}\idty\right)\oplus\left(e^{im\Phi/2} W_\Phi^m-(-1)^{m+n}\idty\right).
\end{equation}
Observe that for $\Phi=2\pi n/m$ with $m,n$ coprime,  $e^{\pm i m\Phi/2}=(-1)^n$. Taking the norm of \eqref{eq:UWPhi-revivals} yields
\begin{equation}
\left\|U_{\Phi/2}^{2m}-(-1)^{m+n}\idty \right\|=\left\| W_\Phi^m-(-1)^{m}\idty \right\|.
\end{equation}
Again, since $\exp(i\Phi/2)$ is a primitive $\ell^{th}$-root of the unity, \eqref{eq:m-n} shows that the spectral properties of 
$W_\Phi$ depend on the parity of $n$.
\begin{description}
\item[$n$ odd] In this case, $\ell=2m$ is even. Applying the revival formula \eqref{U-revivals} yields
\begin{equation}\label{pf:revivals:n-odd}
\left\|W_\Phi^m-(-1)^{m}\idty \right\|=\left\|U_{\Phi/2}^{2m}+(-1)^m\idty \right\|=2|a|^m.
\end{equation}
Moreover, 
\begin{equation}
	\text{spec}(W_{\Phi}^{2m}) = \text{spec}\big((U_{\Phi/2}^{2m})^2\big) = \bigcup_{\theta \in \bbT} \left\{ e^{i 2\omega_\pm(\theta,2m)} \right\},
\end{equation}
where $\omega_\pm(\theta,m)$ is given in~\eqref{def:omega_intro}, and for $2m$ the formula corresponds to the even case.
\item[$n$ even] In this case, $m$ is odd. Thus, again by \eqref{U-revivals},
\begin{equation}\label{pf:revivals:n-even}
\|W_\Phi^m+\idty\|=\|U_{\Phi/2}^{2m}+\idty\|=2|a|^m.
\end{equation}
Moreover,
\begin{equation}
	\text{spec}(W_{\Phi}^{2m}) = \text{spec}\big((U_{\Phi/2}^{m})^4\big) = \bigcup_{\theta \in \bbT} \left\{ e^{i 4\omega_\pm(\theta,m)} \right\}.
\end{equation}
\end{description}
In both cases, \eqref{eq:spec_W} follows by taking the $2m^{\text{th}}$ root.
\hfill\qedsymbol

%%%%%%%%%%%%%%%%%%%%%%%%%%%%%%%%%%
\subsection{Proof of Theorem \ref{thm:sieving}}\label{sec:proof-sieving}
%%%%%%%%%%%%%%%%%%%%%%%%%%%%%%%%%%%%%	
We begin by explicitly describing the action of a split-step quantum walk on basis elements $\{\delta_n^s: n\in\bbZ, s\in\{+,-\}\}$ with the ordering \eqref{def:ordering}. This will allow us to easily identify even-odd decompositions.
%%%
\begin{lemma}\label{lem:W_basis_rep}
Let $C_1(n),C_2(n)$, $n\in\bbZ$ be two sequences of unitary $2\times2$ matrices that serve as local coins of the form \eqref{eq:coin_loc}. Then, the corresponding split-step walk $W$ acts on the basis $\{\delta_n^s:n\in\bbZ,s=\pm\}$ of $\ell^2(\bbZ)\otimes\bbC^2$ as
\begin{align}\label{eq:Wss_action}
W\delta_{n}^+&= a_2(n)\left[a_1(n)\delta_{n+1}^++c_1(n)\delta_n^-\right]+c_2(n)\left[b_1(n-1)\delta_{n}^++d_1(n-1)\delta_{n-1}^-\right] \notag\\
W\delta_{n}^-&= b_2(n)\left[a_1(n)\delta_{n+1}^++c_1(n)\delta_n^-\right]+d_2(n)\left[b_1(n-1)\delta_{n}^++d_1(n-1)\delta_{n-1}^-\right].
\end{align}
\end{lemma}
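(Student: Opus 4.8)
The plan is to reduce the claim to a direct computation by first recording how each of the four elementary factors in $W = S_+ C_1 S_- C_2$ acts on a single basis vector, and then composing them from right to left while carefully tracking the position index. Since $W$ is a fixed finite product, no structural argument is needed; the only task is to organize the bookkeeping so that the index shifts appear in the right places.

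First I would compute the elementary actions. From the definition \eqref{eq:shift_pm} one reads off that $S_+$ advances the $+$-component by one site and fixes the $-$-component, i.e. $S_+\delta_n^+=\delta_{n+1}^+$ and $S_+\delta_n^-=\delta_n^-$; dually, $S_-$ fixes the $+$-component and pushes the $-$-component to the left, $S_-\delta_n^+=\delta_n^+$ and $S_-\delta_n^-=\delta_{n-1}^-$. For the coins, since each $C_j$ acts diagonally in the position index via the matrix \eqref{eq:coin_loc}, one has $C_j\delta_n^+=a_j(n)\delta_n^++c_j(n)\delta_n^-$ and $C_j\delta_n^-=b_j(n)\delta_n^++d_j(n)\delta_n^-$, with $a_j(n),b_j(n),c_j(n),d_j(n)$ the entries of $C_j(n)$.

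Then I would apply these four rules in order. Starting from $\delta_n^+$, applying $C_2$ gives $a_2(n)\delta_n^++c_2(n)\delta_n^-$; applying $S_-$ leaves the $+$-part in place but moves the $-$-part to site $n-1$, producing $a_2(n)\delta_n^++c_2(n)\delta_{n-1}^-$. The key point is that the subsequent coin $C_1$ must now be evaluated at the two \emph{distinct} sites $n$ and $n-1$, which is exactly where the entries $b_1(n-1)$ and $d_1(n-1)$ enter the formula. A final application of $S_+$ advances the $+$-components and fixes the $-$-components, and collecting terms yields the first line of \eqref{eq:Wss_action}. The computation for $\delta_n^-$ is identical except that $C_2$ contributes $b_2(n),d_2(n)$ in place of $a_2(n),c_2(n)$, giving the second line. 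There is no genuine obstacle here; the one step demanding care is the leftward shift introduced by $S_-$, which forces $C_1$ to be read off at site $n-1$ rather than $n$ in the terms multiplying $c_2(n)$ and $d_2(n)$. Keeping the composition order straight and distinguishing which shift moves which internal component are the only places an error could creep in.
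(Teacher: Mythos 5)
Your proposal is correct and follows exactly the route the paper intends: its proof consists of the single remark that the formula ``follows directly from the definition of the state-dependent shift operators in \eqref{eq:shift_pm} and the coins in \eqref{eq:coin_loc},'' and your right-to-left composition $C_2 \to S_- \to C_1 \to S_+$ is precisely that direct computation, carried out in full. In particular you correctly identify the one subtle point, namely that the leftward shift $S_-$ forces $C_1$ to be evaluated at site $n-1$ on the $-$-component, which is where the arguments $b_1(n-1)$ and $d_1(n-1)$ come from.
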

%%%
\begin{proof}
	This follows directly from the definition of the state-dependent shift operators in \eqref{eq:shift_pm} and the coins in \eqref{eq:coin_loc}.
\end{proof}
	
Let $U$ be a shift-coin walk with local coins as in \eqref{eq:coin_loc}. To prove Theorem \ref{thm:sieving}, we start by noting the invariance of the even and odd subspaces $\mathcal{H}_e$ and $\mathcal{H}_o$ under the action of $U^2$. Precisely, 
\begin{equation}
U^2 \mathcal{H}_e\subset\mathcal{H}_e\quad\text{ and }\quad U^2 \mathcal{H}_o\subset\mathcal{H}_o.
\end{equation}	
This can be seen from calculating the action on the basis states: at the even lattice sites the product acts as
\begin{align*}
U^2\delta_{2n}^+ & = a(2n)[a(2n+1)\delta_{2n+2}^++c(2n+1)\delta_{2n}^-]+ c(2n)[b(2n-1)\delta_{2n}^++d(2n-1)\delta_{2n-2}^-] \notag\\
U^2\delta_{2n}^- &= b(2n)[a(2n+1)\delta_{2n+2}^++c(2n+1)\delta_{2n}^-]+d(2n)[b(2n-1)\delta_{2n}^++d(2n-1)\delta_{2n-2}^-] 
\end{align*}
Similarly, on the odd lattice sites,
\begin{align*}
U^2\delta_{2n+1}^+ &= a(2n+1)[a(2n+2)\delta_{2n+3}^++c(2n+2)\delta_{2n+1}^-]+c(2n+1)[b(2n)\delta_{2n+1}^++d(2n)\delta_{2n-1}^-] \notag \\
U^2\delta_{2n+1}^-&= b(2n+1)[a(2n+2)\delta_{2n+3}^++c(2n+2)\delta_{2n+1}^-]+d(2n+1)[b(2n)\delta_{2n+1}^++d(2n)\delta_{2n-1}^-]
\end{align*}
Comparing with \eqref{eq:Wss_action}, one immediately sees that the action of $U^2$ on each subspace $\mathcal{H}_e$ and $\mathcal{H}_o$ coincides with that of two separate split-step quantum walks with local coin operators as described in the statement of Theorem \ref{thm:sieving}.
	
\begin{remark}
	The ordering of the basis in $\mathcal H_e$ and $\mathcal H_o$ is important here: taking instead
	\begin{equation}\label{eq:basis_W12_W21T}
	(\delta^+_{e,n},\delta^-_{e,n}):=(\delta^+_{2n},\delta^-_{2n})\qquad
	(\delta^+_{o,n},\delta^-_{o,n}):=(\delta^-_{2n-1},\delta^+_{2n+1}),
	\end{equation}
	and comparing with the analogue of Lemma \ref{lem:W_basis_rep} for the transposed split-step walk, we find that $U^2$ acts on $\mathcal H_o$ as $W^\top=C_2^\top S_-^\top C_1^\top S_+^\top$ with local coins
	\begin{equation}
		C_1(n)=\sigma_1C(2n+1)^\top\sigma_1,\qquad C_2(n)=\sigma_1C(2n)^\top\sigma_1.
	\end{equation}
\end{remark}

%%%%%%%%%%%%%%%%%%%%%%%%%%%%%%%%%%%%%%%%%%%%%%%%%%%%%%%%%%%%%
\subsection{Speed limit of direct sums and powers of  quantum walks}\label{sec:velocity-sum}
%%%%%%%%%%%%%%%%%%%%%%%%%%%%%%%%%%%%%%%%%%%%%%%%%%%%%%%%%%%%%
In this section, we establish two structural results on the propagation velocity of quantum walks that play a central role in the proof of Theorem \ref{thm:speed_limit_W} and are of independent interest. Both of them are expectable: Lemma \ref{lem:W-W1-W2} says that the velocity of a direct sum of quantum walks is determined by the fastest component, while Lemma \ref{lem:speed_power} establishes that the velocity of the $k^{\text{th}}$ power of a quantum walk is $k$-times the velocity of the walk. Together, these results are useful for analyzing composite or periodically driven quantum walks and underpin our analysis of ballistic propagation.

\begin{lemma}\label{lem:W-W1-W2}
	Let $W_1$ and $W_2$ be arbitrary quantum walks on $\ell^2(\bbZ)\otimes\bbC^2$. Then, the velocity of their direct sum $W_1\oplus W_2$ with respect to even and odd lattice sites \eqref{dec:H} equals twice the maximum of the velocities of $W_1$ and $W_2$, i.e.,
	\begin{equation}\label{eq:W-W1-W2}
		v\left(W_1\oplus W_2\right)=2\max\{v(W_1),v(W_2)\}.
	\end{equation}
\end{lemma}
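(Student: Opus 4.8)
The plan is to argue directly from the definition \eqref{def:v}, since $W_1$ and $W_2$ are \emph{not} assumed translation-invariant and hence the velocity-operator formula \eqref{eq:v:bound:w} is unavailable. The conceptual point is the position-operator decomposition \eqref{dec:Q}: on the full space $\CH$ the position operator restricts to $2Q$ on $\CH_e$ and to $2Q+\idty$ on $\CH_o$, where on the right $Q$ denotes the intrinsic position operator on each subspace (in the standing abuse of notation). Thus $v(W_i)$, computed on the model space $\ell^2(\bbZ)\otimes\bbC^2$ with its intrinsic position operator, enters the ambient velocity with an extra factor $2$: a walk hopping between sites spaced two apart in $\bbZ$ covers twice the intrinsic distance per step. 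I would first record the homogeneity $v(W_i,\lambda\phi)=|\lambda|\,v(W_i,\phi)$, immediate from \eqref{def:v-psi0}, so that $v(W_i,\phi)\le\|\phi\|\,v(W_i)$ for every $\phi\in\dom(Q)$.

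For the upper bound, write a normalized $\psi\in\dom(Q)$ as $\psi=\psi_e\oplus\psi_o$. Since $W_1\oplus W_2$ preserves the even/odd splitting, orthogonality gives
\[
\frac1t\bigl\|Q(W_1\oplus W_2)^t\psi\bigr\|
=\sqrt{\tfrac1{t^2}\|2Q W_1^t\psi_e\|^2+\tfrac1{t^2}\|(2Q+\idty)W_2^t\psi_o\|^2}.
\]
The additive $\idty$ is asymptotically negligible because $\|W_2^t\psi_o\|=\|\psi_o\|$ is bounded, so the odd term has $\limsup$ equal to $2v(W_2,\psi_o)$, and likewise the even term has $\limsup$ equal to $2v(W_1,\psi_e)$. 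Combining the elementary inequality $\limsup_t\sqrt{a_t^2+b_t^2}\le\sqrt{(\limsup_t a_t)^2+(\limsup_t b_t)^2}$ with the homogeneity bound yields
\[
v(W_1\oplus W_2,\psi)\le 2\sqrt{\|\psi_e\|^2 v(W_1)^2+\|\psi_o\|^2 v(W_2)^2}\le 2\max\{v(W_1),v(W_2)\},
\]
where the last step uses $\|\psi_e\|^2+\|\psi_o\|^2=1$ to bound the convex combination by its maximum.

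For the matching lower bound, I would test with states supported entirely in one subspace: taking $\psi=\psi_e$ with $\psi_o=0$ gives $v(W_1\oplus W_2,\psi_e)=2v(W_1,\psi_e)$, and the supremum over normalized $\psi_e\in\dom(Q)\cap\CH_e$ equals $2v(W_1)$; the symmetric choice yields $2v(W_2)$, so $v(W_1\oplus W_2)\ge 2\max\{v(W_1),v(W_2)\}$, and the two bounds combine to the claim. The main subtlety is step (ii) in the upper bound, namely converting the Euclidean combination $\sqrt{\|\psi_e\|^2 v(W_1)^2+\|\psi_o\|^2 v(W_2)^2}$ into a genuine maximum, which hinges precisely on the normalization $\|\psi_e\|^2+\|\psi_o\|^2=1$; the only other point requiring care is checking that the constant shift $\idty$ in $2Q+\idty$ drops out under the $1/t$ scaling. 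Everything else is bookkeeping about which position operator acts on which summand, so I do not anticipate a substantial obstacle.
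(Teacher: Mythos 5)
Your proof is correct and follows essentially the same route as the paper's: the same even/odd splitting of $\psi$, the same disposal of the constant shift in $2Q+\idty$ under the $1/t$ scaling, the same lower bound via states supported in a single subspace, and the same use of $\|\psi_e\|^2+\|\psi_o\|^2=1$ for the upper bound. The only (cosmetic) difference is that your homogeneity observation $v(W_i,\lambda\phi)=|\lambda|\,v(W_i,\phi)$ handles unnormalized components uniformly, whereas the paper splits the upper bound into the cases $\|\psi_\star\|\in\{0,1\}$ and $\|\psi_\star\|\notin\{0,1\}$.
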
	
%%%%%%	

\begin{proof}
First recall the decomposition of the position operator from \eqref{dec:Q}. Writing  $W_{12}:=W_1\oplus W_2$ we observe that for any $t\in\mathbb{N}$ and normalized $\psi=\psi_e\oplus\psi_o$, we have
		\begin{equation}
			QW_{12}^t\psi=2QW_{1}^t\psi_e\oplus(2Q+\idty)W_{2}^t\psi_o,
		\end{equation}
The velocity of $\psi$ with respect to $W_{12}$ becomes
		\begin{align}\label{pf:v-1}
			v(W_{12},\psi)^2&=\limsup_{t\to\infty}\frac{1}{t^2}\Big[\left\|2QW_{1}^t\psi_e\right\|^2+\left\|(2Q+\idty)W_{2}^t\psi_o\right\|^2\Big].
		\end{align}
		We emphasize that the terms on the right hand side are not velocities in general since $\psi_e$ and $\psi_o$ might not be normalized.
		
For the second term on the right of \eqref{pf:v-1}. Two direct applications of triangle inequalities yield
\begin{equation}
\left\|2QW_2^t \psi_o\right\|-\|\psi_o\|\leq \left\|(2Q+\idty)W_{2}^t\psi_o\right\|\leq \left\|2QW_2^t \psi_o\right\|+\|\psi_o\|.
\end{equation}
Use this inequality in (\ref{pf:v-1}) together with the basic fact that (observe that $\frac{1}{t}\|\psi_o\|\to 0$)
\begin{equation}\label{eq:basic-fact}
	\lim_{t\rightarrow\infty} g(t)=0 \quad\Rightarrow\quad \limsup_{t\to\infty}(f(t)+g(t))=\limsup_{t\to\infty}f(t),
\end{equation}
to see that
\begin{equation}\label{pf:main}
v(W_{12},\psi)^2 =4\limsup_{t\to\infty}\frac{1}{t^2}\Big[\|QW_{1}^t\psi_e\|^2+\|Q W_{2}^t\psi_o\|^2\Big].
\end{equation}
In particular, when $\psi$ is supported only on $\CH_e$, i.e. $\psi=\psi_e\oplus 0_{\CH_o}$, and $\|\psi\|=\|\psi_e\|=1$, \eqref{pf:main} reduces to 
\begin{equation}\label{pf:psi-0-0}
2v(W_1,\psi_e)=v(W_{12},\psi_e\oplus 0_{\CH_o}).
\end{equation}
Take the supremum over all normalized initial states $\psi_e\in \dom(Q)$, to see that
\begin{equation}
2v(W_1)= \sup_{\psi_e} v(W_{12},\psi_e\oplus 0_{\CH_o})\leq \sup_{\psi}  v(W_{12},\psi)=v(W_{12})
\end{equation}
Similarly, we have $2v(W_2)\leq v(W_{12})$, and hence
\begin{equation}\label{pf:geq}
v(W_{12})\geq 2\max\{v(W_1),v(W_2)\}.
\end{equation}
To prove the other direction of the inequality \eqref{pf:geq}, we start with \eqref{pf:main} and we consider two cases for the state $\psi\in \dom(Q)$
\begin{itemize}
\item[case 1:] $\|\psi_\star\|\in\{0,1\}$, where $\star\in\{e,o\}$, i.e., $\psi=\psi_e\oplus 0_{\CH_o}$ or $\psi=0_{\CH_e}\oplus \psi_o$.
\item[case 2:] $\|\psi_\star\|\notin\{0,1\}$, where $\star\in\{e,o\}$, i.e.,  $\psi=\psi_e\oplus\psi_o\ \in \dom(Q)$ with nonzero $\psi_e$ and $\psi_o$.
\end{itemize}
For case 1, observe that \eqref{pf:psi-0-0} gives
\begin{equation}
v(W_{12},\psi_e\oplus 0_{\CH_o})=2v(W_1,\psi_e)\leq 2v(W_1).
\end{equation}
and similarly,
\begin{equation}
v(W_{12},0_{\CH_e}\oplus \psi_o)=2v(W_2,\psi_o)\leq 2v(W_2).
\end{equation}
For case 2:  we distribute the limsup in \eqref{pf:main} to obtain
\begin{align}
(v(W_{12},\psi))^2 &\leq 4\|\psi_e\|^2 \left(v(W_1,\frac{\psi_e}{\|\psi_e\|})\right)^2+4\|\psi_o\|^2 \left( v(W_2,\frac{\psi_o}{\|\psi_o\|})\right)^2\\
&\leq 4\max\left\{ (v(W_1))^2,(v(W_2))^2\right\}
\end{align}
where we used the fact that $\|\psi_e\|^2+\|\psi_o\|^2=1$. 

Combining the two cases, we find that for initial state $\psi\in \dom(Q)$,
\begin{equation}
v(W_{12},\psi)\leq 2\max\{v(W_1),v(W_2)\},
\end{equation}
which shows that
\begin{equation}\label{pf:leq}
v(W_{12})\leq 2\max\{v(W_1),v(W_2)\}.
\end{equation}
Combining \eqref{pf:geq} and \eqref{pf:leq} yields the desired equality \eqref{eq:W-W1-W2}.
\end{proof}

We note in passing that it is immediate from the proof that Lemma \ref{lem:W-W1-W2} carries over to more general quantum walks, for example, to those with internal degree of dimensions larger than $2$.

%%%%%%%%%%%%%%%%

%%%%%%%%%%%
\begin{lemma}\label{lem:speed_power}
	Let $W$ be an arbitrary quantum walk. Then
	\begin{equation*}
		v(W^k)= k\,v(W).
	\end{equation*}
\end{lemma}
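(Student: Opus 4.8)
The plan is to prove $v(W^k)=k\,v(W)$ by relating the velocity operators of $W$ and $W^k$ directly. The cleanest route is to avoid the $\limsup$ definition entirely and instead exploit the existence of the limit \eqref{eq:lim}: for any $\psi\in\dom(Q)$ there is a bounded self-adjoint velocity operator $V$ with $\tfrac1t Q(t)\psi\to V\psi$, and by \eqref{eq:v:bound:w} we have $v(W)=\sup_{\|\psi\|=1}\|V\psi\|$. The idea is that the Heisenberg-evolved position operator for $W^k$ sampled at time $t$ is exactly the Heisenberg-evolved position operator for $W$ sampled at time $kt$, namely $(W^k)^{-t}Q(W^k)^t = W^{-kt}QW^{kt}$. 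Therefore
\begin{equation*}
	\lim_{t\to\infty}\frac1t (W^k)^{-t}Q(W^k)^t\,\psi
	= \lim_{t\to\infty} k\cdot\frac{1}{kt} W^{-kt}QW^{kt}\,\psi
	= kV\psi,
\end{equation*}
so the velocity operator of $W^k$ is $V_{W^k}=kV$. Taking the supremum of the norm over normalized $\psi$ then yields $v(W^k)=\sup\|kV\psi\|=k\sup\|V\psi\|=k\,v(W)$ immediately.

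First I would record that since $W$ is unitary and $\dom(Q)$ is $W$-invariant (as noted in the text following \eqref{def:v}), the subsequence $t\mapsto kt$ is a valid subsequence along which the full limit \eqref{eq:lim} for $W$ holds, so the middle equality above is justified rather than merely a $\limsup$ along a subsequence. Second I would confirm that $W^k$ is itself a quantum walk of the type to which \eqref{eq:lim} applies, so that its own velocity operator exists and equals the stated limit; this is where one uses that a finite product of shifts and coins is again such a product, hence $W^k$ is again a banded unitary for which the group-velocity formula is available. Third, with $V_{W^k}=kV$ established as operators on the common dense domain, the identity $v(W^k)=k\,v(W)$ follows by \eqref{eq:v:bound:w} applied to both walks.

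The only genuine subtlety — and the step I expect to be the main obstacle — is the interchange of the limit with the scaling by $k$, i.e. justifying that sampling the $W$-velocity process along the arithmetic subsequence $kt$ still converges to $V\psi$. Because the limit in \eqref{eq:lim} exists (it is a genuine limit, not just a $\limsup$), it persists along any subsequence, so this is in fact automatic; the care needed is only to phrase it as a statement about the common limit operator $V$ rather than about the state-dependent quantity $v(W,\psi)$, since the latter is defined via a $\limsup$ and subsequences of a $\limsup$ need not agree. Using the operator-level convergence sidesteps this entirely. An alternative, purely definitional argument would bound $v(W^k)$ above and below using the triangle inequality $\|QW^{kt}\psi\|\le\|QW^{k\lfloor t\rfloor}\psi\|+O(1)$ to pass between the continuous and the $k$-spaced time grids, but this is more cumbersome and I would present it only as a remark; the velocity-operator argument is both shorter and conceptually transparent.
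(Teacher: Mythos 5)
There is a genuine gap, and it is exactly where you lean on \eqref{eq:lim} and \eqref{eq:v:bound:w}: those statements hold, and are stated in the paper, only for \emph{translation-invariant} walks, whereas Lemma \ref{lem:speed_power} is asserted for an \emph{arbitrary} quantum walk. This generality is not a cosmetic point --- it is precisely what the paper needs. The lemma is applied in \eqref{U-v-bound} with base walk $W=U_\Phi$ and $k=m$, and in \eqref{U2-2U} with base walk $W=U_{\Phi/2}$ and $k=2$; the electric walks $U_\Phi$ and $U_{\Phi/2}$ are \emph{not} translation-invariant (only suitable powers, such as $U_\Phi^m$, are). For such walks no velocity operator is available a priori, and $v(W,\psi)$ in \eqref{def:v-psi0} is a genuine $\limsup$. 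Your supporting claim that $W^k$ being ``again a banded unitary'' makes ``the group-velocity formula available'' is false: bandedness gives boundedness of $[Q,W]$, not a Fourier diagonalization into dispersion relations, and for position-dependent coins $\tfrac1t Q(t)\psi$ need not converge at all (think of a walk with non-periodic coins). So your argument proves the lemma only in the translation-invariant case --- which is the one case the paper has no need for, since there both $v(W)$ and $v(W^k)$ can be read off directly from the dispersion relations.

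The ``purely definitional argument'' that you dismiss as cumbersome and would relegate to a remark is in fact the paper's proof, and it is the only one of the two that works at the required level of generality. The inequality $v(W^k,\psi)\le k\,v(W,\psi)$ is immediate because $(kt)_{t\in\bbN}$ is a subsequence of $(t)_{t\in\bbN}$. For the converse, the paper writes $t=ks+r$ with $0\le r<k$, invokes the elementary $\limsup$ identity of Lemma \ref{lem:t-ks+r}, and controls the offset $r$ via
\begin{equation*}
\|QW^{ks+r}\psi\|\le\|QW^{ks}\psi\|+\|[Q,W^r]\|,\qquad \|[Q,W^r]\|\le r\,\|[Q,W]\|<\infty,
\end{equation*}
which uses only the banded structure of $W$. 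This yields $v(W,\psi)\le\tfrac1k v(W^k,\psi)$ for every normalized $\psi\in\dom(Q)$, hence $k\,v(W)\le v(W^k)$, with no velocity operator anywhere. Note finally that even if you tried to salvage your operator-level argument for the paper's actual application --- where the \emph{power} $U_\Phi^m$ is translation-invariant but the base walk is not --- you would still have to pass from the $k$-spaced time grid back to all times, i.e., you would still need the residue-class-plus-commutator step; the step you hoped to avoid is the actual content of the lemma.
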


%%%%%%%%%%%
\begin{proof}
For a fixed $k\in\mathbb{N}$, the one direction that $v(W^k)\leq k\,v(W)$ follows from the following simple argument: 
For any normalized $\psi\in \dom(Q)$, recall that
\begin{equation}
v(W^k,\psi)=\limsup_{t\to\infty}\frac1t\|Q W^{kt}\psi\|,
\end{equation}
Since $(kt)_{t\in\mathbb N}$ is a subsequence of $(t)_{t\in\bbN}$, we obtain
\begin{equation}
v(W^k,\psi)\leq kv(W,\psi)\leq k v(W).
\end{equation}
Taking the $\sup$ shows that $v(W^k)\leq k\,v(W)$.

To prove the other direction, note that for any $t\in\mathbb{N}$, there are unique $s\in\mathbb{N}$ and $r\in\{0,1,\ldots,k-1\}$ such that $t=k s+r$. 
Then use Lemma \ref{lem:t-ks+r} below with $f(t):=t^{-1}\|Q W^t\psi\|$, $\psi\in\dom(Q)$ and $\|\psi\|=1$, to see that
\begin{align}\label{eq:W-ks+r:1}
v(W,\psi)&=\max_{0\leq r< k}\limsup_{s\rightarrow\infty}\frac{1}{ks+r}\left\|Q W^r W^{k s}\psi\right\| \notag\\
&\leq\frac{1}{k}\max_{0\leq r<k}\limsup_{s\rightarrow\infty}\frac{1}{s}\left(\left\|Q W^{ks}\psi\right\|+\left\|[Q,W^r]W^{ks}\psi\right\|\right).
\end{align}
Note that the second term in \eqref{eq:W-ks+r:1} is bounded as
\begin{equation}
\frac{1}{s}\left\|[Q,W^r]W^{ks}\psi\right\|\leq \frac{1}{s}\|[Q,W^r]\|\leq \frac{r}{s}\|[Q,W]\| \longrightarrow 0 \text{ as }s\to\infty
\end{equation}
The first-to-second step follows from expanding the commutator, and the last step follows from the boundedness of $[Q,W]$.
Thus, inequality \eqref{eq:W-ks+r:1} reduces to
\begin{equation}
v(W,\psi)\leq \frac{1}{k}\limsup_{s\rightarrow\infty}\frac{1}{s}\left\|Q W^{ks}\psi\right\|=\frac{1}{k}v(W^k,\psi)\leq \frac{1}{k}v(W^k).
\end{equation}
Taking the $\sup$ over normalized states in $\dom(Q)$ finishes the proof of the other direction 
$v(W^k)\geq k v(W)$.
\end{proof}

%%%%%%%%%%%%%%%%
\begin{lemma}\label{lem:t-ks+r}
	Let $k\in\mathbb{N}$ be fixed and let $f:\mathbb{N}\to(\mathbb{R}\cup\{\pm\infty\})$ be any sequence. Then
	\begin{equation}\label{eq:t-ksr}
		\limsup_{t\to\infty}f(t)=\max_{0\leq r< k}\limsup_{s\to\infty}f(ks+r).
	\end{equation}
\end{lemma}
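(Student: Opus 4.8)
The plan is to reduce both sides to the standard tail-supremum description of the limit superior and then exploit that a finite maximum commutes with limits of monotone sequences. Writing $L:=\limsup_{t\to\infty}f(t)$ and $L_r:=\limsup_{s\to\infty}f(ks+r)$ for $0\le r<k$, I would recall that in the extended reals
\[
L=\lim_{N\to\infty}\sup_{t\ge N}f(t),
\]
where the sequence $N\mapsto\sup_{t\ge N}f(t)$ is nonincreasing, so the limit exists and equals the infimum over $N$. The identical description applies to each $L_r$ via the subsequence $(ks+r)_s$.

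First I would record the elementary partition of the index set: every integer $t\ge N$ is uniquely of the form $t=ks+r$ with $0\le r<k$ and $s\ge s_r(N):=\lceil (N-r)/k\rceil$. Since a supremum over a finite disjoint union splits as the maximum of the suprema over the pieces, this yields
\[
\sup_{t\ge N}f(t)=\max_{0\le r<k}\ \sup_{s\ge s_r(N)}f(ks+r),
\]
and one notes that $s_r(N)\to\infty$ as $N\to\infty$ for each fixed $r$.

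The only genuine point to verify is that the limit $N\to\infty$ may be pulled inside the finite maximum. This holds because each of the $k$ sequences $N\mapsto\sup_{s\ge s_r(N)}f(ks+r)$ is nonincreasing, and for finitely many convergent monotone extended-real sequences one has $\lim_N\max_r a_r(N)=\max_r\lim_N a_r(N)$. Applying this to the displayed identity and using $\lim_{N\to\infty}\sup_{s\ge s_r(N)}f(ks+r)=L_r$ gives $L=\max_{0\le r<k}L_r$, which is precisely \eqref{eq:t-ksr}.

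I expect the main (and essentially only) obstacle to be bookkeeping in the extended-real setting, namely ensuring the statements about suprema and monotone limits remain valid when $f$ attains the values $\pm\infty$; all the manipulations above are stable under this, so no care beyond invoking monotonicity is needed. An equally short alternative would characterize $L$ as the largest subsequential limit of $(f(t))_t$: the inequality $\max_r L_r\le L$ is then immediate since each $(ks+r)_s$ is a subsequence, while for the reverse one observes that any subsequence of $(f(t))_t$ must contain infinitely many terms in at least one residue class mod $k$, so each of its limit points is dominated by the corresponding $L_r$.
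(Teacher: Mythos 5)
Your proof is correct, but it takes a different route from the paper's. The paper argues via two inequalities: the direction $\limsup_t f(t)\ge\max_r\limsup_s f(ks+r)$ is immediate because each $(ks+r)_s$ is a subsequence of $(t)_t$, and the reverse direction uses an $\epsilon$-argument combined with pigeonholing — among the infinitely many $t$ with $f(t)>L-\epsilon$, infinitely many share a common residue $r_0$ mod $k$, whence $L-\epsilon\le\limsup_s f(ks+r_0)$. (This is essentially the "equally short alternative" you sketch in your last paragraph.) Your main argument instead works at the level of tail suprema: you partition $\{t\ge N\}$ exactly into the $k$ arithmetic progressions $\{ks+r: s\ge s_r(N)\}$, obtain the identity $\sup_{t\ge N}f(t)=\max_{0\le r<k}\sup_{s\ge s_r(N)}f(ks+r)$, and then pass $N\to\infty$ through the finite maximum using monotonicity of the tail suprema. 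This buys you a single exact identity in place of two inequalities, avoids the $\epsilon$-bookkeeping entirely (which in the paper's version silently needs a separate convention when $L=+\infty$, where $L-\epsilon$ should be replaced by an arbitrary threshold $M$), and handles the extended-real values uniformly; the price is invoking the interchange of a limit with a finite maximum of monotone extended-real sequences, which you correctly identify and justify as the only nontrivial step. Both proofs are complete and of comparable length.
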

Note that this lemma intuitively follows from the fact that
\begin{equation}
	\mathbb{N}=\bigcup_{t\in\mathbb{N}}\{t\}=\bigcup_{0\leq r<k}\bigcup_{s\in\mathbb{N}}\{ks+r\}.
\end{equation}

\begin{proof}
	For every fixed  $r_0\in\{0,1,\ldots,k-1\}$, $(ks+r_0)_{s\in\mathbb{N}}$ is a subsequence of $(t)_{t\in\mathbb{N}}$, so
	\begin{equation}
		\limsup_{t\rightarrow\infty} f(t)\geq \limsup_{s\rightarrow\infty}f(ks +r_0)
	\end{equation}
	Since this holds for every $r_0\in\{0,1,\ldots,k-1\}$, then 
	\begin{equation}\label{eq:f:>}
		\limsup_{t\rightarrow\infty} f(t)\geq \max_{0\leq r<k} \limsup_{s\rightarrow\infty}f(ks +r).
	\end{equation}
	The other direction is more involved. Let
	\begin{equation}
		L:=\limsup_{t\rightarrow\infty} f(t)
	\end{equation}
	By definition of the $\limsup$, given any $\epsilon>0$, there are infinitely many $t\in\mathbb{N}$ such that $L-\epsilon<f(t)$. These values of $t$ define a subsequence $(t_n)_{n\in\mathbb{N}}$ of $(t)_{t\in\mathbb{N}}$ such that
	\begin{equation}\label{eq:L-xn}
		L-\epsilon< f(t_n)
	\end{equation}
	for all $n\in\mathbb{N}$. For each $t_n$, there is an $s_n\in\mathbb{N}$ and $r_n\in\{0,1,\ldots,k-1\}$, such that $t_n=k s_n+r_n$. Since there are only $k$ possible different values of $r_n$, then for any fixed $r_0\in\{0,1,\ldots,k-1\}$, there is a subsequence $(t_{n_m})$ of $(t_n)$ and a sequence $(s_{n_m})\in\mathbb{N}\rightarrow\infty$ of distinct values (it is a subsequence of $(s)_{s\in\mathbb{N}}$) such that
	\begin{equation}
		t_{n_m}=ks_{n_m}+r_0.
	\end{equation} 
	Hence \eqref{eq:L-xn} yields that
	\begin{align}\label{eq:L-snm}
		L-\epsilon &\leq \limsup_{m\rightarrow\infty} f(t_{n_m})=\limsup_{m\rightarrow\infty} f(k s_{n_m}+r_0) 
		\leq \limsup_{s\rightarrow\infty} f(k s+r_0) \notag\\
		&\leq \max_{0\leq r<k}\limsup_{s\rightarrow\infty} f(k s+r)
	\end{align}
	Since $\epsilon>0$ is arbitrary then 
	\begin{equation}\label{eq:f:<}
		\limsup_{t\rightarrow\infty} f(t)\leq \max_{0\leq r<k} \limsup_{s\rightarrow\infty}f(ks +r).
	\end{equation}
	\eqref{eq:f:>} and \eqref{eq:f:<} prove \eqref{eq:t-ksr}.
\end{proof}

%%%%%%%%%%%%%%%%%%%%%%%%%%%%%%%%%%%%%%
\appendix

%%%%%%%%%%%%%%%%%%%%%%%%%%%%%%%%%%%%%%%
\section{Connection to generalized extended CMV matrices}
%%%%%%%%%%%%%%%%%%%%%%%%%%%%%%%%%%%%%%%%%%%%%%%%%%%%%%%%
The the conjecture from \cite{ARS2023-CMP} we are proving and the bound \eqref{eq:bound_HG} that we are improving are formulated in the language of so-called \emph{Cantero-Moral-Velázquez} (CMV) matrices, which are objects of primary interest in the theory of orthogonal polynomials \cite{Simon2005OPUC1,Simon2005OPUC2}. 
There is a many-to-one correspondence between the objects of interest here, the split-step walks introduced in Section \ref{sec:QWs} and CMV matrices, called the \emph{CGMV connection} due to the foundational papers \cite{CGMV2012QIP,canteroMatrixvaluedSzegoPolynomials2010}. It was generalized to the setting of split-step quantum walks (resp. generalized extended CMV matrices) in \cite{CFO1,CFLOZ}.

For the convenience of the reader and to directly connect our results to those of \cite{ARS2023-CMP} we define these objects here and briefly comment on how their dynamical properties, particularly their maximal velocities and revival phenomena, can be inferred from those of quantum walks with electric fields. To this end, we call $(\alpha, \rho)\in \bbS^3 = \{(z_1, z_2) \in \bbC^2 : |z_1|^2 + |z_2|^2 = 1\}$ a \emph{Verblunsky pair} and for such a pair define the unitary matrix
\begin{equation}\label{def:Theta}
	\Theta(\alpha, \rho) := \begin{bmatrix} \overline{\alpha} & \rho \\ \overline{\rho} & -\alpha \end{bmatrix},
\end{equation}
with $\det(\Theta(\alpha,\rho)) = -1$. 
Given a sequence of Verblunsky pairs $(\alpha_n, \rho_n)_{n\in\bbZ}$, we define block-diagonal operators $\mathcal{L}$ and $\mathcal{M}$ on $\ell^2(\mathbb{Z})$ via the even and odd subsequences:
\begin{align}
	\mathcal{L} &= \mathcal{L}((\alpha_{2n}, \rho_{2n})_{n\in\mathbb{Z}}) = \bigoplus_{n\in\mathbb{Z}} \Theta(\alpha_{2n}, \rho_{2n}),  \label{def:L}\\
	\mathcal{M} &= \mathcal{M}((\alpha_{2n+1}, \rho_{2n+1})_{n\in\mathbb{Z}}) = \bigoplus_{n\in\mathbb{Z}} \Theta(\alpha_{2n+1}, \rho_{2n+1}),\label{def:M}
\end{align}
where each $\Theta(\alpha_j, \rho_j)$ acts on $\ell^2(\{j, j+1\})$.
The product of these operators defines a \emph{generalized extended CMV (GECMV) matrix} \cite{canteroFivediagonalMatricesZeros2003,BHJ2003}:
\begin{equation}
	\mathcal{E}=\mathcal{E}\big((\alpha_n, \rho_n)_{n\in\mathbb{Z}}\big) = \mathcal{L}\mathcal{M}.
\end{equation}
In the standard basis $\{\delta_n, n\in\bbZ\}$ of $\ell^2(\mathbb{Z})$, $\mathcal{E}$ has the form
\begin{equation} \label{eq:gecmv}
	\mathcal{E} =
	\begin{bmatrix}
		\ddots & \ddots & \ddots & \ddots &&&&  \\
		& \overline{\alpha_0}\rho_{-1} & \boxed{-\overline{\alpha_0}\alpha_{-1}} & \overline{\alpha_1}\rho_0 & \rho_1\rho_0 &&&  \\
		& \overline{\rho_0}\rho_{-1} & -\overline{\rho_0}\alpha_{-1} & {-\overline{\alpha_1}\alpha_0} & -\rho_1\alpha_0 &&&  \\
		&&  & \overline{\alpha_2}\rho_1 & -\overline{\alpha_2}\alpha_1 & \overline{\alpha_3}\rho_2 & \rho_3\rho_2 & \\
		&& & \overline{\rho_2}\rho_1 & -\overline{\rho_2}\alpha_1 & -\overline{\alpha_3}\alpha_2 & -\rho_3\alpha_2 &    \\
		&& && \ddots & \ddots & \ddots & \ddots &
	\end{bmatrix},
\end{equation}
where the boxed entry is $\langle\delta_0,\mathcal{E}\delta_0\rangle$.

The correspondence between GECMV matrices and split-step walks follows from grouping elements of $\bbZ$ into cells of two \cite{CGMV2012QIP,CFO1,CFLOZ}. 
More explicitly, identifying $\ell^2(\mathbb{Z})$ with $\ell^2(\mathbb{Z})\otimes \bbC^2$ via the mapping (see the ordering \eqref{def:ordering})
\begin{equation}\label{eq:base_identification}
	\delta_{2n-1}\mapsto \delta_{n}^+ \quad \text{ and } \quad \delta_{2n}\mapsto \delta_{n}^-,
\end{equation}
which gives the following relationship between the defining objects, the $\Theta$-matrices and the coin operators:
%%%%%%%%%%%%%
\begin{lemma}\label{CMV-SS}
	Let $(\alpha_n,\rho_n)$, $n\in\bbZ$ be a sequence of Verblunsky pairs and define $\Theta(\alpha_n,\rho_n)$ via \eqref{def:Theta}. 
	Then, identifying $\ell^2(\bbZ)$ with $\ell^2(\bbZ)\otimes \bbC^2$ via the mapping \eqref{eq:base_identification}, the GECMV matrix $\mathcal E=\mathcal L\mathcal M$ can be interpreted as a split-step walk as
	\begin{equation}\label{eq:EW}
		\mathcal{E} = S_+ C_1 S_- C_2, \quad \text{where}\quad C_1(n) = \sigma_1 \Theta(\alpha_{2n}, \rho_{2n}),\quad C_2(n) = \sigma_1 \Theta(\alpha_{2n-1}, \rho_{2n-1}),
	\end{equation}
	with $S_\pm$ as in \eqref{eq:shift_pm} and $\sigma_1$ is the first Pauli matrix acting on the $\bbC^2$-cells.
\end{lemma}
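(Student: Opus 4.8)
The plan is to verify the operator identity \eqref{eq:EW} directly on the canonical basis, by translating the two block-diagonal factors $\mathcal{L}$ and $\mathcal{M}$ into the cell picture through the identification \eqref{eq:base_identification} and then matching the outcome against the explicit split-step action recorded in Lemma \ref{lem:W_basis_rep}. Since both operators are bounded, it suffices to check that $\mathcal{E}\delta_n^+ = W\delta_n^+$ and $\mathcal{E}\delta_n^- = W\delta_n^-$ for every $n\in\bbZ$, where $W=S_+C_1S_-C_2$ with the coins given in \eqref{eq:EW}.

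First I would determine how each factor acts after the identification. Because every $\Theta(\alpha_j,\rho_j)$ in \eqref{def:M} couples the pair $\{2n+1,2n+2\}$, which maps to $\{\delta_{n+1}^+,\delta_{n+1}^-\}$, the operator $\mathcal{M}$ is block-diagonal with respect to cells and acts on cell $n$ exactly as the matrix $\Theta(\alpha_{2n-1},\rho_{2n-1})$ in the $(+,-)$-basis. By contrast, each block of $\mathcal{L}$ in \eqref{def:L} couples $\{2n,2n+1\}$, which maps to $\{\delta_n^-,\delta_{n+1}^+\}$; hence $\mathcal{L}$ links the $-$ component of cell $n$ with the $+$ component of the neighboring cell $n+1$, and I would record its action on $\delta_n^\pm$ explicitly. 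Composing $\mathcal{E}=\mathcal{L}\mathcal{M}$ then produces, for each of $\mathcal{E}\delta_n^\pm$, a four-term expression whose coefficients are products of $\overline{\alpha},\overline{\rho},\alpha,\rho$ at consecutive indices.

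The last step is to compare these four coefficients with the four terms generated by \eqref{eq:Wss_action} after reading off the coin entries from $C_1(n)=\sigma_1\Theta(\alpha_{2n},\rho_{2n})$ and $C_2(n)=\sigma_1\Theta(\alpha_{2n-1},\rho_{2n-1})$, where the left multiplication by $\sigma_1$ merely swaps the two rows of $\Theta$. I expect the only genuine difficulty to be bookkeeping: keeping the linear index on $\bbZ$ and the cell/internal index aligned through the shifts, and tracking the provenance of the $\sigma_1$ factors. Conceptually, the $\sigma_1$'s appear because the block boundaries of $\mathcal{L}$ and $\mathcal{M}$ are offset by one site from the $(+,-)$-splitting of a cell; this can be made transparent through the factorization $\mathcal{M}=(\idty_\bbZ\otimes\sigma_1)\,C_2$ and $\mathcal{L}=S_+C_1S_-(\idty_\bbZ\otimes\sigma_1)$, so that the two $\sigma_1$'s cancel in $\mathcal{E}=\mathcal{L}\mathcal{M}=S_+C_1S_-C_2$ via $(\idty_\bbZ\otimes\sigma_1)^2=\idty$. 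I would present the direct coefficient matching as the rigorous verification and invoke this factorization as the structural explanation.
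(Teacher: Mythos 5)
Your proposal is correct, and it in fact contains the paper's own proof inside it: the ``structural explanation'' you relegate to the end --- writing $\mathcal{M}=(\idty_\bbZ\otimes\sigma_1)\,C_2$ and $\mathcal{L}=S_+C_1S_-(\idty_\bbZ\otimes\sigma_1)$ and cancelling the two $\sigma_1$'s --- is precisely the argument the paper gives, phrased there as
\begin{equation*}
	\mathcal{L}=T\left(\bigoplus_{n\in\bbZ}\Theta(\alpha_{2n},\rho_{2n})\right)T^{-1},\qquad
	T^{\pm1}(\idty_\bbZ\otimes\sigma_1)=S_\pm ,
\end{equation*}
so that $\mathcal{E}=\mathcal{L}\mathcal{M}=S_+C_1S_-C_2$ follows in one line. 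The route you present as the primary, ``rigorous'' one --- computing $\mathcal{E}\delta_n^\pm$ from the block structure of $\mathcal{L}$ and $\mathcal{M}$ and matching the resulting four coefficients against \eqref{eq:Wss_action} --- is a valid alternative and is exactly the method the paper uses for the companion statement, Theorem \ref{thm:sieving}; your block-structure observations ($\mathcal{M}$ cell-diagonal, acting on cell $n$ as $\Theta(\alpha_{2n-1},\rho_{2n-1})$; $\mathcal{L}$ coupling $\delta_n^-$ to $\delta_{n+1}^+$) are accurate, so that verification would go through. The only difference is one of emphasis: the paper takes the operator factorization itself as the proof, which is cleaner in that it treats all basis vectors at once and avoids the four-term bookkeeping you correctly identify as the main hazard of the direct computation.
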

\begin{proof}
	First, note that in terms of decomposition $\ell^2(\bbZ)=\bigoplus_{n\in\bbZ}\ell^2(\{2n-1,2n\})$, $\mathcal{L}$ and $\mathcal{M}$ are given as 
	\begin{equation}\label{eq:LM:shift}
		\mathcal{L}=T \left(\bigoplus_{n\in\bbZ} \Theta(\alpha_{2n},\rho_{2n})\right) T^{-1}, \qquad \mathcal{M}=\bigoplus_{n\in\bbZ} \Theta(\alpha_{2n-1},\rho_{2n-1}),
	\end{equation}
	where $T: \delta_n\mapsto\delta_{n+1}$ is the shift on $\ell^2(\bbZ)$. That is, $\mathcal{L}$ is understood as a shifted block diagonal matrix. Importantly, due to the conjugation with $T$, $\Theta(\alpha_{2n},\rho_{2n})$ and $\Theta(\alpha_{2n-1},\rho_{2n-1})$ in \eqref{eq:LM:shift} now both act on $\ell^2(\{2n-1,2n\})$.
	
	Identifying bases as in \eqref{eq:base_identification}, the shift $T^{\pm 1}$ on $\ell^2(\bbZ)$ acts on $\ell^2(\bbZ)\otimes \mathbb{C}^2$ as $T^{\pm 1}\otimes  P_\pm\sigma_1+\idty_\bbZ\otimes P_\mp\sigma_1$, and thus differs from $S_\pm$ in \eqref{eq:shift_pm} by a flipping operation on the coin space: we have
	\begin{equation}
		T^{\pm 1}(\idty\otimes\sigma_1)=S_\pm.
	\end{equation}
	Plugging this into \eqref{eq:LM:shift} yields
	\begin{align}
		\mathcal{E}&= \mathcal{L}\mathcal{M}= S_+ \left(\bigoplus_{n\in\bbZ} \sigma_1\Theta(\alpha_{2n},\rho_{2n})\right) S_- \left(\bigoplus_{n\in\bbZ} \sigma_1\Theta(\alpha_{2n-1},\rho_{2n-1})\right)=S_+C_1S_-C_2
	\end{align}
	with $C_1,C_2$ as in \eqref{eq:EW}. 
\end{proof}

In the special case where either all even or all odd Verblunsky coefficients vanish, e.g., $\alpha_{2n}= 0$ for all $n$, the GECMV matrix reduces to a shift-coin walk.

Thus, our setting directly translates to that of GECMV matrices with fixed Verblunsky pair $(\alpha, \rho)$: we have $U=SC=\mathcal{L}((0, 1)_{n\in\mathbb{Z}})\,\mathcal{M}((\alpha, \rho)_{n\in\mathbb{Z}})$ and $W=S_+ C_1 S_- C_2=\mathcal{L}((\alpha, \rho)_{n\in\mathbb{Z}})\,\mathcal{M}((\alpha, \rho)_{n\in\mathbb{Z}})$, and electric fields are incorporated analogously to \eqref{eq:electrify}.
Hence, the results of Theorem \ref{thm:speed_limit_W} on maximal velocity, revival relations, and spectral properties established for the electric quantum walks in Theorem \ref{thm:speed_limit_W} carry over directly to the GECMV matrix setting, where the role of the coin parameter $a$ is played by the Verblunsky parameter $\rho$, and the explicit formulas retain the same structure.

A word of warning is in order: the electrified $U_\Phi$ and $W_\Phi$ do not relate to GECMV matrices in the above sense since their Verblunsky pairs are not in $\bbS^3$. Nevertheless, the CGMV connection extends to this setting and yields a standard CMV matrix, see \cite{locQuasiPer} resp. \cite[Appendix B]{CFLOZ}.

%%%%%%%%%%%%%%%%%%%%%%%%%%%%%%%%%%%%%%
\subsection*{Acknowledgements}
H.~A.-R.\ was supported in part by the UAE University under grant number G00004622.
%%%%%%%%%%%%%%%%%%%%%%%%%%%%%%%%%%%%%%

\bibliographystyle{abbrvArXiv}
\bibliography{electric_speed_limit}

\end{document}